\newenvironment{linenomath*}{}{}
\newtheorem{observation}{Observation}
\newcommand{\canIgnore}[1]{{\tiny (Ignored stuff here)}}
\newcommand{\lex}{\text{lex}}
\newcommand{\skel}[1]{#1^{\text{skel}}}
\newcommand{\calP}{\mathcal{P}}
\newtheorem{definition}{Definition}
\newtheorem{corollary}{Corollary}
\begin{document}
\doi{}
\Issue{0}{0}{0}{0}{0}
\HeadingAuthor{T.Biedl and C.Pennarun}
\HeadingTitle{Non-aligned Drawings} 
\title{Non-aligned Drawings of Planar Graphs}
\Ack{Research supported by NSERC.  A preliminary version appeared at GD'16.}
\author[first]{Therese Biedl}{biedl@uwaterloo.ca}
\author[second]{Claire Pennarun}{claire.pennarun@labri.fr}

\affiliation[first]{David R.~Cheriton School of Computer Science,
University of Waterloo, Waterloo, Canada}
\affiliation[second]{Univ. Bordeaux, 
 CNRS, LaBRI, UMR 5800, F-33400 Talence, France}

\submitted{November 2016}%
\reviewed{}%
\revised{}%
\reviewed{}%
\revised{}%
\accepted{}%
\final{}%
\published{}%
\type{Regular paper}%
\editor{A. Editor}%

\maketitle

\begin{abstract}A {\em non-aligned} drawing of a graph is a drawing
where no two vertices are in the same row or column.
Auber et al.~showed that not all planar graphs
have a non-aligned planar straight-line drawing in the $n\times n$-grid.  
They also showed that such a drawing exists if up to $n-3$ edges may have a bend.

In this paper, we give algorithms for non-aligned planar drawings
that improve on the results by Auber et al.  In particular, we
give such drawings in an $n\times n$-grid with
at most $\frac{2n-5}{3}$ bends, and we study what grid-size can be
achieved if we insist on having straight-line drawings.
\end{abstract}

\Body
\section{Introduction}

At the GD 2015 conference, Auber et al.~\cite{ABDP15} 
introduced the concept of {\em rook-drawings}:  These are
drawings of a graph in an $n\times n$-grid such that no two 
vertices are in the same row or the same column (thus,
if the vertices were rooks on a chessboard, 
then no vertex could beat any other).
They showed that not all planar graphs have a planar straight-line
rook-drawing, and then gave a construction of planar rook-drawings
with at most $n-3$ bends. From now on, all drawings are
required to be planar.

In this paper, we continue the study of rook-drawings.  Note that
if a graph has no straight-line rook-drawing, then we can relax 
the restrictions in two possible ways.  We could either, as Auber
et al.~did, allow to use bends for some of the edges, and try to keep
the number of bends small.  Or we could increase the grid-size and
ask what size of grid can be achieved for straight-line
drawings in which no two vertices share a row or a column; this
type of drawing is known as {\em non-aligned drawing} \cite{AB-GD11}.
A rook-drawing is then a non-aligned drawing on an $n \times n$-grid.

\medskip\noindent{\bf Existing results:}  Apart from the paper by
Auber et al., non-aligned drawings have arisen in a few other contexts.
Alamdari and Biedl showed that every every graph has an inner rectangular
drawing also has a non-aligned drawing \cite{AB-GD11}. These drawings are
so-called rectangle-of-influence drawings and can hence be assumed to be
in an $n\times n$-grid.  In particular, every 4-connected planar
graph with at most $3n-7$ edges therefore has a rook-drawing (see
Section~\ref{sec:4conn} for details).
Non-aligned drawings were also created by Di Giacomo et al. \cite{DDKLM14}
in the context of upward-rightward drawings.  They showed that
every planar graph has a non-aligned drawing in an $O(n^4)\times O(n^4)$-grid.
Finally, there have been studies about drawing graphs with the opposite goal,
namely,
creating as many collinear vertices as possible \cite{LDFMR16}.

\medskip\noindent{\bf Our results:}
In this paper, we show the following (the bounds listed here are upper bounds; see the sections for tighter bounds):

\begin{itemize}
\item Every planar graph has a non-aligned 
	straight-line drawing in an $n^2\times n^2$-grid.
	This is achieved by taking any weak barycentric representation
	(for example, the one by Schnyder \cite{Sch90}),
	scaling it by a big enough factor,
	and then moving vertices slightly so that they have distinct 
	coordinates while maintaining a weak barycentric representation.
\item Every planar graph has a 
	non-aligned straight-line drawing in
	an $n\times \frac{1}{2}n^3$-grid.
	This is achieved by creating drawings with the canonical ordering
	\cite{FPP90} in a standard fashion (similar to \cite{CK97}).  
	However, we pre-compute all the
	$x$-coordinates (and in particular, make them a permutation of
	$\{1,\dots,n\}$), and then argue that with the standard
	construction the slopes do not get too big, and hence the height is
	quadratic.  Modifying the construction a bit, we can also achieve
	that all $y$-coordinates are distinct and that the height is cubic.
\item Every planar graph has a rook-drawing with at most $\frac{2n-5}{3}$ bends.
	This is achieved via creating a so-called rectangle-of-influence
	drawing of a modification of the graph, and arguing that each modification
	can be undone while adding only one bend.
\end{itemize}

Our bounds are even better for 4-connected planar graphs.  In particular,
every 4-connected planar graph has a rook-drawing with at most 1 bend
(and more generally, the number of bends is no more than the number of
so-called filled triangles).  
We also show that any so-called nested-triangle graph has a 
non-aligned straight-line drawing in an
$n\times (\frac{4}{3}n-1)$-grid.

\section{Non-aligned straight-line drawings}

In this section, all drawings are required to be straight-line drawings.

\subsection{Non-aligned drawings on an $n^2 \times n^2$-grid}

We first show how to construct non-aligned drawings in an $n^2\times n^2$-grid
by scaling and perturbing a so-called {\em weak barycentric representation}
(reviewed below).

In the following, a vertex $v$ is assigned to a triplet of
non-negative integer coordinates 
$(p_0(v),p_1(v),p_2(v))$.  For two vertices $u,v$ and $i=0,1,2$, we say that
$(p_i(u),p_{i+1}(u)) <_{\lex} (p_i(v), p_{i+1}(v))$ if
either $p_i(u) < p_i(v)$, or $p_i(u) = p_i(v)$ and $p_{i+1}(u) < p_{i+1}(v)$. Note that in this section, addition on the subscripts is done modulo 3.

\begin{definition}[Weak barycentric representation \cite{Sch90}]
A \emph{weak barycentric representation} of a graph $G$ is an injective function $\calP$ that maps each $v\in V(G)$ to a point $(p_0(v),p_1(v),p_2(v)) \in \mathbb{N}_0^3$ such that
\begin{itemize}
\item $p_0(v) + p_1(v) + p_2(v) = c$ for every vertex $v$, where
$c$ is a constant independent of the vertex,
\item for each edge $(u,v)$ and each vertex $w \neq \{u,v\}$, there is some $k \in \{0,1,2\}$ such that 
$(p_k(u), p_{k+1}(u)) <_{\lex} (p_k(z), p_{k+1}(z))$ and
$(p_k(v), p_{k+1}(v)) <_{\lex} (p_k(z), p_{k+1}(z))$.
\end{itemize}
\end{definition}

\begin{theorem}[\cite{Sch90}]
\label{thm:schnyder}
Every planar graph with $n$ vertices has a weak barycentric representation with $c=n-1$.
Furthermore, $0\leq p_i(v) \leq n-2$ for all vertices $v\in V$ and all $i\in \{0,1,2\}$.
\end{theorem}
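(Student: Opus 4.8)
The plan is to follow Schnyder's proof via \emph{realizers}. First I would reduce to triangulations: adding an edge to $G$ only adds conditions to the second bullet in the definition of a weak barycentric representation, so a weak barycentric representation of any maximal planar graph on the same vertex set restricts to one of $G$. Hence it suffices to prove the statement when $G$ is a maximal planar graph with $n\ge 3$ vertices (for $n\le 2$ the statement is degenerate). Fix a plane embedding with outer triangle $v_0,v_1,v_2$ in this cyclic order. The main structure is a \emph{Schnyder realizer}: an orientation and a colouring of the interior edges with colours $\{0,1,2\}$ so that each interior vertex has exactly one outgoing edge of each colour and around each interior vertex the edges occur in the cyclic order ``out-edge of colour $0$, in-edges of colour $2$, out-edge of colour $1$, in-edges of colour $0$, out-edge of colour $2$, in-edges of colour $1$''; the two outer edges at $v_i$ are taken to be its outgoing colour-$(i-1)$ and colour-$(i+1)$ edges. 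I would first prove that such a realizer exists --- e.g.\ by induction on $n$ using a canonical ordering (as is done elsewhere in this paper) or by repeatedly contracting a suitable edge. Given the realizer, for a vertex $v$ and colour $i$ let $P_i(v)$ be the monochromatic directed path of colour-$i$ edges leaving $v$; it is simple, ends at $v_i$, and $P_i(v_i)=\{v_i\}$. A standard consequence of the local condition is that $P_0(v),P_1(v),P_2(v)$ pairwise meet only at $v$, so together with the outer triangle they cut the interior into three regions $R_0(v),R_1(v),R_2(v)$, where $R_i(v)$ is the one not incident to $v_i$, bounded by $P_{i-1}(v)$, $P_{i+1}(v)$ and the outer edge $v_{i-1}v_{i+1}$.

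I would then set $p_i(v)$ to be the number of vertices $w\ne v$ that lie in the closed region $\overline{R_i(v)}$ but not on $P_{i-1}(v)$ --- equivalently, the vertices strictly inside $R_i(v)$ together with those on $P_{i+1}(v)\setminus\{v\}$. With this definition the two numerical claims become routine counts. Every vertex $w\ne v$ is counted by exactly one of $p_0(v),p_1(v),p_2(v)$: either it is strictly inside a single region $R_i(v)$, contributing to $p_i(v)$, or it lies on exactly one of the paths $P_j(v)\setminus\{v\}$, contributing to $p_{j-1}(v)$ (the paths being internally disjoint). Hence $p_0(v)+p_1(v)+p_2(v)=n-1$. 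Moreover $v_{i-1}\in P_{i-1}(v)\setminus\{v\}$ whenever $v\ne v_{i-1}$, so at least one vertex is never counted and $p_i(v)\le n-2$; the few cases $v=v_{i-1}$ are verified directly (they give $p_i(v)=1$). The same bookkeeping yields the explicit coordinates $p(v_i)=(n-2,1,0)$ up to cyclic shift.

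The remaining task --- and the step I expect to be the real obstacle --- is the weak-barycentric edge condition. Fix an edge $(u,v)$ and a vertex $w\notin\{u,v\}$. Since an edge of a plane graph cannot cross a path built from edges, $(u,v)$ lies in the closure of a single region of $w$: there is a colour $k$ with $u,v\in\overline{R_k(w)}$. If an endpoint $x\in\{u,v\}$ lies strictly inside $R_k(w)$, then $R_k(x)\subsetneq R_k(w)$ --- the smaller region sharing the tails of the two bounding paths towards $v_{k-1}$ and $v_{k+1}$ --- and since $x$ itself is counted by $p_k(w)$ we get $p_k(x)<p_k(w)$; when both endpoints are of this type the required inequality $(p_k(u),p_{k+1}(u))<_{\lex}(p_k(w),p_{k+1}(w))$, and its analogue for $v$, follow immediately. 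The delicate case is when an endpoint $x$ lies \emph{on} a bounding path of $R_k(w)$: one then only gets $p_k(x)\le p_k(w)$, and the crux is to show $k$ can be chosen so that every such endpoint lies on the path $P_{k+1}(w)$ ``owned'' by $R_k(w)$ in the half-open convention above (keeping the first coordinate strict), or else, when an endpoint is forced onto the ``disowned'' path $P_{k-1}(w)$, that the tie in the first coordinate is broken strictly in the second, i.e.\ $p_{k+1}(x)<p_{k+1}(w)$. Carrying out this side-analysis --- reconciling the left/right structure of the realizer's monochromatic paths with the boundary conventions in the definition of the $p_i$ --- is the technical heart of the proof, and is exactly why the representation one obtains is only \emph{weak} (lexicographic). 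Finally, the three outer edges carry no colour and are treated separately, which is immediate from the explicit coordinates $p(v_i)$ above.
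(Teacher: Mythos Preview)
The paper does not prove this theorem at all: it is stated with a citation to Schnyder~\cite{Sch90} and used as a black box. So there is no ``paper's proof'' to compare against, and what you have written is essentially a reconstruction of Schnyder's original argument via realizers. As such it is the right approach, and your sketch is accurate: the reduction to triangulations, the existence of a realizer, the disjointness of the three outgoing monochromatic paths, the half-open vertex-counting convention for the $p_i$, and the resulting identities $\sum_i p_i(v)=n-1$ and $p_i(v)\le n-2$ are all correctly set up and argued.

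One remark on the part you yourself flag as the obstacle. Your boundary analysis for the lexicographic condition is on the right track but still only a plan: when an endpoint $x$ of the edge lies on $P_{k-1}(w)$ (the ``disowned'' boundary), you need to actually prove $p_{k+1}(x)<p_{k+1}(w)$, and this does not follow from the region-containment heuristic alone. The clean way to close it is to use the stronger structural fact that for \emph{any} vertex $x\in\overline{R_k(w)}\setminus\{w\}$ one has $\overline{R_k(x)}\subseteq\overline{R_k(w)}$ and $P_{k+1}(x)\setminus\{x\}\subseteq\overline{R_k(w)}$, together with the observation that $w$ itself is counted by $p_k(w)$'s complement appropriately; from this one derives that either $p_k(x)<p_k(w)$, or $p_k(x)=p_k(w)$ and then necessarily $x\in P_{k-1}(w)$, in which case $R_{k+1}(x)\subsetneq R_{k+1}(w)$ strictly (since $w\in\overline{R_{k+1}(x)}$ fails while $x\in\overline{R_{k+1}(w)}$). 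Filling in this step is routine once stated, but it is the place where a careless argument goes wrong, so be explicit about it in a full write-up.
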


Observe that weak barycentric representations are preserved
under scaling, i.e., if we have a weak barycentric representation $\calP$ 
(say with constant $c$),
then we can scale all assigned coordinates by the same factor $N$
and obtain another weak barycentric representation (with
constant $c\cdot N$).  We need to do slightly more, namely scale and ``twist'',
as detailed in the following lemma.

\begin{lemma}
\label{lem:transform_barycentric}
Let $G$ be a graph with a weak barycentric representation $\calP$ with $\calP=\left(\left(p_0(v),p_1(v),p_2(v)\right)_{v\in V}\right)$.
Let $N\geq 1+ \max_{v\in V} \{\max_{i=0,1,2} p_i(v)\}$ be a positive integer.  Define $\calP'$
to be the assignment $p'_i(v) := N\cdot p_i(v)+p_{i+1}(v)$ for $i=0,1,2$.
Then $\calP'$ is also a weak barycentric representation.
\end{lemma}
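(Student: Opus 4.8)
The plan is to verify the three defining conditions of a weak barycentric representation for $\calP'$: that the three coordinates of each vertex sum to a fixed constant, that $\calP'$ is injective, and that every edge $(u,v)$ together with every third vertex $w$ admits an index $k\in\{0,1,2\}$ witnessing the lexicographic inequalities. The first two will be short computations; the heart of the argument is the third, and I expect it to go through by keeping, verbatim, the index $k$ that already works for $\calP$.

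For the constant-sum condition I would just expand $\sum_{i}p'_i(v)=\sum_i\bigl(N p_i(v)+p_{i+1}(v)\bigr)=N\sum_i p_i(v)+\sum_i p_i(v)=(N+1)c$, using that $i\mapsto i+1$ permutes $\{0,1,2\}$; so $\calP'$ has the vertex-independent constant $(N+1)c$, and plainly each $p'_i(v)\in\mathbb{N}_0$.

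The key observation to isolate is that the encoding $(a,b)\mapsto Na+b$ is strictly order-preserving from the lexicographic order to the usual order, provided $0\le b<N$: if $a<c$ then $Na+b\le N(a+1)-1\le Nc-1<Nc+d$, and if $a=c$ the implication is immediate, so $(a,b)<_{\lex}(c,d)$ implies $Na+b<Nc+d$ whenever $0\le b,d<N$. The hypothesis on $N$ is exactly what makes this usable: it forces $0\le p_i(v)\le N-1$ for all $v$ and $i$, so each $p'_k(v)=N p_k(v)+p_{k+1}(v)$ is a ``two-digit base-$N$'' encoding of the pair $(p_k(v),p_{k+1}(v))$. Injectivity of $\calP'$ drops out of the same bound: if $\calP'(u)=\calP'(v)$ then $N p_i(u)+p_{i+1}(u)=N p_i(v)+p_{i+1}(v)$ for every $i$, and since the $p_{i+1}$-terms lie in $\{0,\dots,N-1\}$, reducing modulo $N$ gives $p_{i+1}(u)=p_{i+1}(v)$ and then $p_i(u)=p_i(v)$ for all $i$; hence $\calP(u)=\calP(v)$, and injectivity of $\calP$ yields $u=v$.

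For the barycentric inequality, fix an edge $(u,v)$ and a vertex $w\notin\{u,v\}$ and let $k$ be the index provided by $\calP$, so $(p_k(u),p_{k+1}(u))<_{\lex}(p_k(w),p_{k+1}(w))$ and the same with $v$ in place of $u$. The order-preservation observation then gives $p'_k(u)<p'_k(w)$ and $p'_k(v)<p'_k(w)$, and a strict inequality in the first coordinate alone already implies $(p'_k(u),p'_{k+1}(u))<_{\lex}(p'_k(w),p'_{k+1}(w))$ and likewise for $v$; so $k$ still works for $\calP'$. The only real care needed anywhere is the bookkeeping around the bound $N\ge 1+\max_{v,i}p_i(v)$, which is precisely what keeps the ``digits'' $p_i(v)$ in $\{0,\dots,N-1\}$ and hence makes the encoding both injective and lex-order-preserving; once that is in place the remaining steps are routine.
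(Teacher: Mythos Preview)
Your proof is correct and follows essentially the same approach as the paper's: verify the constant-sum, injectivity, and lexicographic conditions directly, reusing for $\calP'$ the same index $k$ that works for $\calP$. The only difference is presentational: you explicitly isolate the ``base-$N$ encoding is lex-order-preserving'' observation, whereas the paper carries out the equivalent two-case analysis (either $p_k(u)<p_k(z)$ or $p_k(u)=p_k(z)$ and $p_{k+1}(u)<p_{k+1}(z)$) inline.
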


\begin{proof}
We need to check the following properties:
\begin{itemize}
\item[(a)] For some constant $c$ we have $p'_1(v) + p'_2(v) + p'_3(v) = c$ for all vertices $v$.
\item[(b)] $\calP'$ is injective.
\item[(c)] For each edge $(u,v)$ and each vertex $z \neq \{u,v\}$, there is some $k \in \{0,1,2\}$ such that 
$(p'_k(u), p'_{k+1}(u)) <_{\lex} (p'_k(z), p'_{k+1}(z))$ and
$(p'_k(v), p'_{k+1}(v)) <_{\lex} (p'_k(z), p'_{k+1}(z))$.
\end{itemize}

(a) Let $c_P$ be the constant of $\calP$.  Then
for each vertex $v$, $p'_1(v) + p'_2(v) + p'_3(v) = N\left(p_1(v) + p_2(v) + p_3(v)\right) + p_1(v) + p_2(v) + p_3(v) = N \cdot c_P + c_P$, which is a constant.

\medskip
(b) Let $\{u,v\}$ be two vertices of $G$, $u \neq v$. 
Since $\calP$ is injective, we know that there exists $i \in \{0,1,2\}$ such that $p_i(u) \neq p_i(v)$. Without loss of generality, $p_i(u) > p_i(v)$. Since all coordinates $p_i$ are integers, $p_i(u) \geq p_i(v) +1$. Thus $N \cdot p_i(u) \geq N \cdot p_i(v) + N > N\cdot p_i(v) + p_{i+1}(v)-p_{i+1}(u)$ by 
$p_{i+1}(v)<N$ and
$p_{i+1}(u)\geq 0$.
Thus $p'_{i}(u) > p'_{i}(v)$ and $\calP'$ is injective.

\medskip
(c) Let $(u,v)$ be an edge of $G$ and $z \neq \{u,v\}$ a vertex of $G$. Since $\calP$ is a weak barycentric representation, there is some $k \in \{0,1,2\}$ such that 
$(p_k(u), p_{k+1}(u)) <_{\lex} (p_k(z), p_{k+1}(z))$ and $(p_k(v), p_{k+1}(v)) <_{\lex} (p_k(z), p_{k+1}(z))$.

We only show the claim for $u$, and have two cases:
\begin{itemize}
\item $p_k(u) < p_k(z)$: As in part (b), then $p'_k(u)<p'_k(z)$.
\item $p_k(u) = p_k(z)$: Then $p_{k+1}(u) < p_{k+1}(z)$ 
and
$p'_k(u)=N p_k(u)+p_{k+1}(u) = N p_k(z)+p_{k+1}(u) < N p_k(z)+p_{k+1}(z) = p'_{k}(z)$.
\end{itemize}
So either way $p'_k(u)<p'_k(z)$ and hence 
$(p'_k(u), p'_{k+1}(u)) <_{\lex} (p'_k(z), p'_{k+1}(z))$.
\end{proof}

Applying this to Schnyder's weak barycentric representation, we now have:

\begin{theorem}
\label{thm:non-aligned}
Every planar graph has a non-aligned straight-line planar drawing in an $(n(n-2))\times (n(n-2))$-grid.
\end{theorem}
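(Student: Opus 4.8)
The plan is to combine Theorem~\ref{thm:schnyder} with Lemma~\ref{lem:transform_barycentric} and then invoke the standard fact that any weak barycentric representation yields a planar straight-line drawing (project each triplet $(p_0,p_1,p_2)$ onto the plane, e.g.\ via $v\mapsto(p_1(v),p_2(v))$). The one thing that needs care is verifying that the resulting drawing is \emph{non-aligned}, i.e.\ that the two projected coordinates are each injective across all vertices, and that all coordinates stay within the claimed $n(n-2)\times n(n-2)$ grid.

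First I would take the weak barycentric representation $\calP$ of Theorem~\ref{thm:schnyder}, which has constant $c=n-1$ and satisfies $0\le p_i(v)\le n-2$ for all $v$ and $i$. Then I would apply Lemma~\ref{lem:transform_barycentric} with $N=n-1$, which is a valid choice since $N=n-1\ge 1+\max_{v,i}p_i(v)=1+(n-2)$. The lemma guarantees that $\calP'$, defined by $p'_i(v)=N\cdot p_i(v)+p_{i+1}(v)$, is again a weak barycentric representation. It is a standard result (and the reason weak barycentric representations were introduced) that a weak barycentric representation gives a planar straight-line drawing when one drops one coordinate, say by placing $v$ at $(p'_1(v),p'_2(v))$; I would cite~\cite{Sch90} for this.

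Next I would check non-alignedness and the grid bound. For the coordinate bound: $p'_i(v)=N p_i(v)+p_{i+1}(v)\le (n-1)(n-2)+(n-2)=n(n-2)$, and $p'_i(v)\ge 0$, so every coordinate lies in $\{0,1,\dots,n(n-2)\}$, which fits in an $n(n-2)\times n(n-2)$-grid after a shift. For non-alignedness I need that $p'_1$ is injective and $p'_2$ is injective over $V$. This follows from part (b) of Lemma~\ref{lem:transform_barycentric}: for any two distinct vertices $u,v$ there is an index $i$ with $p_i(u)\ne p_i(v)$, and the proof of (b) shows that \emph{for that particular $i$} we get $p'_i(u)\ne p'_i(v)$. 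The subtlety is that this gives injectivity of $p'_i$ only for \emph{some} $i$ depending on the pair, not simultaneously for the two coordinates $i=1,2$ we actually use in the drawing. To fix this I would use the constant-sum property: if $p'_1(u)=p'_1(v)$ and $p'_2(u)=p'_2(v)$, then since $p'_0(v)+p'_1(v)+p'_2(v)$ is the same constant for all vertices (part (a)), also $p'_0(u)=p'_0(v)$, contradicting injectivity of $\calP'$. Hence no two vertices share their first coordinate, and likewise none share their second, so the drawing is non-aligned.

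The main obstacle, and really the only place where something could go wrong, is precisely this last point: injectivity of $\calP'$ as a map into $\mathbb{N}_0^3$ does not by itself say each individual coordinate is injective, and a naive projection of a weak barycentric representation is generally \emph{not} non-aligned — that is exactly why the ``twist'' in Lemma~\ref{lem:transform_barycentric} is needed. So the crux of the write-up is the two-line argument above combining part~(a) (constant sum) with part~(b) (injectivity) to upgrade to per-coordinate injectivity. Everything else — the choice $N=n-1$, the grid computation, and the appeal to~\cite{Sch90} for planarity of the projected drawing — is routine.
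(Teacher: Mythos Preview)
Your overall plan matches the paper's proof: apply Lemma~\ref{lem:transform_barycentric} with $N=n-1$ to Schnyder's representation, project to two coordinates, and check the grid bound. But your non-alignedness argument has a genuine logical gap.

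You correctly state what must be shown: that each of the two projected coordinates is \emph{individually} injective. You then argue by contradiction assuming $p'_1(u)=p'_1(v)$ \emph{and} $p'_2(u)=p'_2(v)$, and derive (via the constant sum) $p'_0(u)=p'_0(v)$, contradicting injectivity of $\calP'$. But the contrapositive of this is only ``$p'_1(u)\neq p'_1(v)$ \emph{or} $p'_2(u)\neq p'_2(v)$'', i.e.\ the pair map is injective. It does \emph{not} yield ``no two vertices share their first coordinate'' --- two vertices could perfectly well agree on $p'_1$ while differing on $p'_2$ without violating anything you proved. Your sentence ``Hence no two vertices share their first coordinate, and likewise none share their second'' is a non-sequitur.

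The fix, which is what the paper does, is to argue per coordinate and use the constant-sum property at the level of the \emph{original} representation $\calP$, not $\calP'$. Fix $i$ and distinct $u,v$. If $p_i(u)\neq p_i(v)$, then the $N$-adic inequality (exactly the computation in part~(b) of Lemma~\ref{lem:transform_barycentric}) gives $p'_i(u)\neq p'_i(v)$. If $p_i(u)=p_i(v)$, then necessarily $p_{i+1}(u)\neq p_{i+1}(v)$, for otherwise the constant sum $p_0+p_1+p_2=n-1$ would force $p_{i+2}(u)=p_{i+2}(v)$ as well, contradicting injectivity of $\calP$; and then $p'_i(u)=Np_i(u)+p_{i+1}(u)\neq Np_i(v)+p_{i+1}(v)=p'_i(v)$. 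Either way $p'_i(u)\neq p'_i(v)$, which is exactly per-coordinate injectivity. (A minor point: the paper also observes $p'_i(v)\ge 1$ since $p_i(v)=p_{i+1}(v)=0$ would force $p_{i+2}(v)=n-1>n-2$, so no shift is needed for the grid bound.)
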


\begin{proof}
Let $\calP=\left(\left(p_0(v),p_1(v),p_2(v)\right)_{v\in V}\right)$ be the weak barycentric representation of 
Theorem~\ref{thm:schnyder}; we know that $0\leq p_i(v)\leq n-2$ for all $v$ and all $i$.
Now apply Lemma~\ref{lem:transform_barycentric} with $N=n-1$
to obtain the weak barycentric representation $\calP'$ with
$p'_i(v)=(n-1)p_i(v)+p_{i+1}(v)$.  Observe that $p'_i(v)\leq (n-1)(n-2)+(n-2)=n(n-2)$.
Also, $p'_i(v)\geq 1$ since not both $p_i(v)$ and $p_{i+1}(v)$ can be 0.   (More precisely,
$p_i(v)=0=p_{i+1}(v)$ would imply $p_{i+2}(v)=n-1$, contradicting $p_{i+2}(v)\leq n-2$.)

As shown by Schnyder \cite{Sch90}, mapping each vertex $v$ to point
$(p_0'(v),p_1'(v))$ gives a planar straight-line drawing of $G$.  By the
above, this drawing has the desired grid-size.  It remains
to show that it is non-aligned, i.e.,
for any two vertices $u,v$ and any $i \in \{0, 1\}$, we have $p'_i(u) \neq p'_i(v)$.
Assume after possible renaming that $p_i(u)\leq p_i(v)$.  We have two cases:
\begin{itemize}
\item If $p_i(u) < p_i(v)$, then $p_i(u) \leq p_i(v) -1$ since $\calP$ assigns integers. Thus 
$N \cdot p_i(u) \leq N \cdot p_i(v) - N < N\cdot p_i(v) - p_{i+1}(u) + p_{i+1}(v)$
 since $p_{i+1}(u)<N$ and $p_{i+1}(v)\geq 0$.  Therefore $p'_i(u) < p'_i(v)$.
\item If $p_i(u) = p_i(v)$, then $p_{i+1}(u) \neq p_{i+1}(v)$ (else the three coordinates of $u$ and $v$ would be the same, which is impossible since $\calP$ is an injective function). Then $p'_i(u) = N \cdot p_i(u) + p_{i+1}(u) \neq N \cdot p_i(v) + p_{i+1}(v)= p'_i(v)$.
\end{itemize}
\end{proof}

\subsection{Non-aligned drawings on an $n \times f(n)$-grid}

We now show how to build non-aligned drawings for which the width
is the minimum-possible $n$, and the height is
$\approx \frac{1}{2}n^3$.  We use the well-known
canonical ordering for {\em triangulated
plane graphs}, i.e., graphs for which the planar embedding is
fixed and all faces (including the outer-face) are triangles.
We hence assume throughout that $G$ is triangulated; we can
achieve this by adding edges and delete them
in the obtained drawing.

The {\em canonical ordering} \cite{FPP90} of such a graph
is a vertex order $v_1,\dots,v_n$ such that $\{v_1,v_2,v_n\}$
is the outer-face, and for any $3\leq k\leq n$, the graph $G_k$
induced by $v_1,\dots,v_k$ is 2-connected.  This implies that
$v_k$ has at least 2 {\em predecessors} (i.e., neighbours in $G_{k-1}$), 
and its predecessors form an interval on the outer-face of $G_{k-1}$.  
We assume (after possible renaming) that $v_1$ is the neighbour of $v_2$ found in clockwise order
on the outer-face, and enumerate the outer-face of graph $G_{k-1}$
in clockwise order
as $c_1,\dots,c_L$ with $c_1=v_1$ and $c_L=v_2$.  Then the predecessors
of $v_k$ consist of $c_\ell,\dots,c_r$ for some $1\leq \ell < r \leq L$;
we call $c_\ell$ and $c_r$ the {\em leftmost} and {\em rightmost}
predecessors of $v_k$ (see also Figure~\ref{fig:canonical}).
In this section, $x(v)$ and $y(v)$ denote the $x$- and $y$-coordinates
of a vertex $v$, respectively.

\subsubsection{Distinct $x$-coordinates}

We first give a construction that achieves distinct $x$-coordinates
in $\{1,\dots,n\}$ (but $y$-coordinates may coincide).
Let $v_1,\dots,v_n$ be a canonical ordering.
The goal is to build a straight-line drawing
of the graph $G_k$ induced by $v_1,\dots,v_k$ using induction on $k$.  
The key idea is to define {\em all} $x$-coordinates beforehand.
Orient the edges of $G$ as follows.  Direct $(v_1,v_2)$ as
$v_1\rightarrow v_2$.  For $k\geq 3$, if $c_r$ is the 
rightmost predecessor of $v_k$, then direct all edges
from predecessors of $v_k$ towards $v_k$, with the exception of $(v_{k},c_r)$,
which is directed $v_{k}\rightarrow c_r$.

By induction on $k$, one easily shows that the orientation of $G_k$ is 
acyclic, with unique source $v_1$ and unique sink $v_2$, and the outer-face 
directed $c_1\rightarrow \dots \rightarrow c_L$.%
\footnote{For readers familiar with how a Schnyder wood $T_1,T_2,T_3$  
can be obtained from a canonical ordering: The orientation is the same
as $T_1^{-1}\cup T_2\cup T_3$, and hence acyclic \cite{Sch90}.}
Find a topological order $x: V\rightarrow \{1,\dots,n\}$ of 
the vertices, i.e., if $u\rightarrow v$ then $x(u)<x(v)$.  We use this
topological order as our $x$-coordinates, and hence have $x(v_1)=1$ and $x(v_2)=n$.   
(We thus have two distinct vertex-orderings:  one defined by
the canonical ordering, which is used to compute $y$-coordinates, and one
defined by the topological ordering derived from the canonical ordering, which 
directly gives the $x$-coordinates.)

Now construct a drawing of $G_k$ that respects these $x$-coordinates by
induction on $k$ (see also Figure~\ref{fig:addition2}).
Start with $v_1$ at $(1,2)$, $v_3$ at $(x(v_3),2)$ and $v_2$ at $(n,1)$.

For $k\geq 3$, let $c_\ell$ and $c_r$ be the leftmost and rightmost 
predecessors of $v_{k+1}$.  Notice that $x(c_\ell)<\dots<x(c_r)$ due
to our orientation, which in particular implies that for any $\ell\leq j\leq r$,
the upward ray from $c_j$ intersects no other vertex or edge.
Let $y^*$ be the smallest integer value such
that any $c_j$, for $\ell\leq j\leq r$, can ``see'' the point 
$p=(x(v_{k+1}),y^*)$
in the sense that that the line segment from $c_j$ to $p$
intersects no other vertices or edges.  Such a $y^*$ exists since 
the upward ray from $c_j$ is empty: by tilting this ray slightly,
$c_j$ can also see all sufficiently high points on
the vertical line $\{x=x(v_{k+1})\}$. 
Placing $v_{k+1}$ at
$(x(v_{k+1}),y^*)$ hence gives a planar drawing of $G_{k+1}$,
and we continue until we get a drawing of $G_n=G$.

\begin{figure}[ht]
\hspace*{\fill}
\begin{subfigure}[b]{0.48\linewidth}
\includegraphics[page=1,width=\linewidth,trim=20 0 0 20,clip]{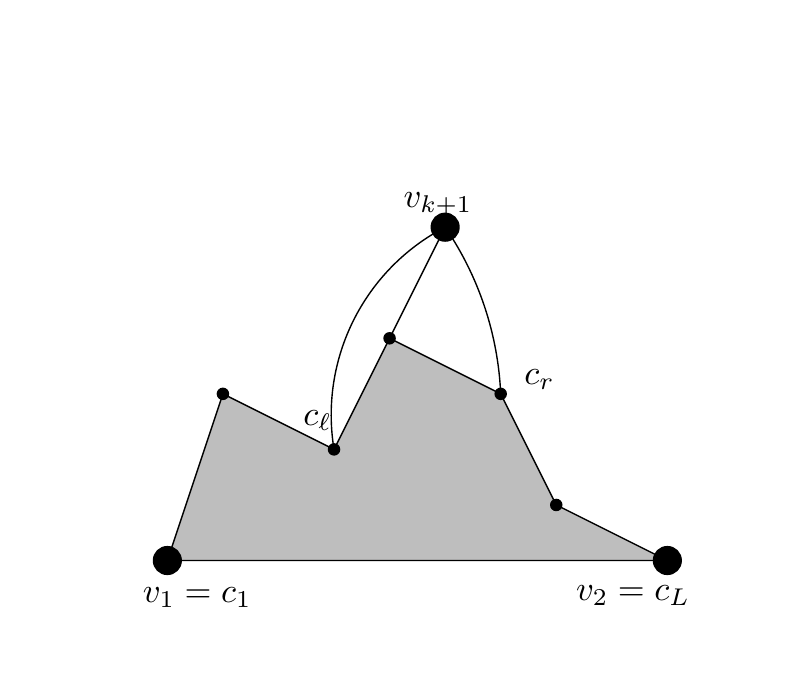}
\caption{Illustration of a canonical order.}
\label{fig:canonical}
\end{subfigure}
\hspace*{\fill}
\begin{subfigure}[b]{0.48\linewidth}
\includegraphics[page=2,width=\linewidth,trim=20 0 0 20,clip]{canonical_order.pdf}
\caption{Finding a $y$-coordinate for $v_{k+1}$.}
\label{fig:addition2}
\end{subfigure}
\hspace*{\fill}
\caption{Drawing algorithm to find distinct $x$-coordinates in $\{1,\dots,n\}$.}
\end{figure}

To analyze the height of this construction, we bound the slopes.

\begin{lemma}
\label{lem:slopes1}
Define $s(k):=k-3$ for $k\geq 3$.     
All edges on the outer-face of the constructed drawing of $G_k$ have slope at most $s(k)$ for $k
\geq 3$.
\end{lemma}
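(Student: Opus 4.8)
The plan is to argue by induction on $k$, tracking exactly which outer‑face edges change when a vertex is inserted. For the base case $k=3$, the outer face of $G_3$ is the triangle on $v_1=(1,2)$, $v_3=(x(v_3),2)$, $v_2=(n,1)$. The chosen orientation forces $1=x(v_1)<x(v_3)<x(v_2)=n$, so $v_1v_3$ is horizontal (slope $0$) while both $v_3v_2$ and $v_1v_2$ drop from height $2$ to height $1$ as $x$ increases and hence have negative slope; all three are at most $0=s(3)$.

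For the inductive step, suppose the statement holds for $G_k$ and insert $v_{k+1}$ with predecessors $c_\ell,\dots,c_r$ (so $\ell<r$). Passing from $G_k$ to $G_{k+1}$ removes precisely the outer‑face edges $c_\ell c_{\ell+1},\dots,c_{r-1}c_r$, keeps every other outer‑face edge (in particular the never‑moved edge $v_1v_2$), and adds the two edges $c_\ell v_{k+1}$ and $v_{k+1}c_r$; the orientation moreover forces $x(c_\ell)<x(v_{k+1})<x(c_r)$. Every kept edge has slope $\le s(k)=k-3<s(k+1)$, so it remains only to bound the slopes of the two new edges, i.e.\ to control $y^*$ from above and from below. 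The single geometric fact I would isolate first is a confinement estimate for the predecessor path: since every edge $c_mc_{m+1}$ with $\ell\le m\le r-1$ has slope $\le k-3$ by induction, a telescoping sum over these slopes yields $y(c_i)\le y(c_j)+(k-3)(x(c_i)-x(c_j))$ whenever $\ell\le j\le i\le r$, and the same bound at every interior point of the path; equivalently, the whole path $c_\ell\to\cdots\to c_r$ lies weakly below the ray of slope $k-3$ issued from any of its own vertices.

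To bound the slope of $c_\ell v_{k+1}$ I would exhibit an admissible height. Set $y^+:=y(c_\ell)+(k-3)\big(x(v_{k+1})-x(c_\ell)\big)+1$; I claim the point $q=(x(v_{k+1}),y^+)$ is visible from every $c_j$ with $\ell\le j\le r$, so that the placement rule forces $y^*\le y^+$. For $\ell\le j\le r-1$ the confinement estimate shows the segment $c_jq$ has slope strictly greater than $k-3$, hence greater than the slope of every later path edge, so it stays strictly above the path and therefore inside the former outer face; the segments to the various $c_j$ moreover fan out in $x$‑order around $q$ and so do not cross one another. For $j=r$ visibility only requires $q$ to lie above edge $c_{r-1}c_r$ in column $x(v_{k+1})$, and the height of that edge in that column is at most $y(c_\ell)+(k-3)(x(v_{k+1})-x(c_\ell))=y^+-1$. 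Hence $y^*\le y^+$, so
\[
\mathrm{slope}(c_\ell v_{k+1})=\frac{y^*-y(c_\ell)}{x(v_{k+1})-x(c_\ell)}\le (k-3)+\frac{1}{x(v_{k+1})-x(c_\ell)}\le k-2=s(k+1),
\]
where the last step uses $x(v_{k+1})-x(c_\ell)\ge 1$; the $+1$ in $y^+$ is exactly the slack needed to turn ``weakly below'' into a strict, planar, integer placement, at the cost of a single unit of slope.

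For $v_{k+1}c_r$ I would use the matching lower bound on $y^*$. Since column $x(v_{k+1})$ lies strictly between those of $c_{r-1}$ and $c_r$, the former outer face is bounded there by edge $c_{r-1}c_r$, so $v_{k+1}$ being placed in that face forces $y^*$ to exceed the height $h=y(c_r)-\mathrm{slope}(c_{r-1}c_r)\big(x(c_r)-x(v_{k+1})\big)$ of that edge in that column; as $\mathrm{slope}(c_{r-1}c_r)\le k-3\le k-2$ this gives $y^*\ge y(c_r)-(k-2)(x(c_r)-x(v_{k+1}))$, hence $\mathrm{slope}(v_{k+1}c_r)=\big(y(c_r)-y^*\big)/\big(x(c_r)-x(v_{k+1})\big)\le k-2$. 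Together the two bounds show every outer‑face edge of $G_{k+1}$ has slope at most $s(k+1)$, completing the induction. The step I expect to be the real work is the visibility claim for the specific point $q$: one must check simultaneously that each segment $c_jq$ stays in the former outer face and that these segments are mutually non‑crossing, and then feed this back to conclude $y^*\le y^+$; everything else is bookkeeping once the confinement estimate for the predecessor path is available.
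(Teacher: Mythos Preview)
Your proof is correct and follows essentially the same inductive scheme as the paper: exhibit the candidate point $q=(x(v_{k+1}),\,y(c_\ell)+s(k)(x(v_{k+1})-x(c_\ell))+1)$, argue it sees all predecessors (using that outer-face edges have slope $\le s(k)$), conclude $y^*\le y^+$, and bound the slope of $(c_\ell,v_{k+1})$ from this.  Your handling of $(v_{k+1},c_r)$ via an explicit lower bound on $y^*$ (namely that $v_{k+1}$ must lie above edge $c_{r-1}c_r$ in its column) is in fact slightly cleaner than the paper, which proves the slope bound for the segment $(p,c_r)$ at height $y'$ and then asserts it for $(v_{k+1},c_r)$ at height $y^*\le y'$ without making the transfer explicit; your formulation closes that small gap.
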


\begin{proof}
Clearly this holds for $k=3$ and $s(3)=0$.  Now assume it holds for some $k\geq 3$, and
let $c_\ell,\dots,c_r$ be the predecessors of $v_{k+1}$.   Fix one predecessor
$c_j$ for $\ell\leq j<r$, and 
consider the ray $\rho_j$ of slope $s(k)$ starting from $c_j$.  Since 
all edges in $G_k$ have slope at most $s(k)$, vertex $c_j$ can see all
points that are above $\rho_j$ and to the right of $c_j$.    In particular,
consider therefore the point where ray $\rho_\ell$ intersects
the vertical line $\{x=x(v_{k+1})\}$, 
and set $y'$ to be the smallest integer $y$-coordinate that is strictly
above this intersection.    By construction, point $p=(x(v_{k+1}),y')$ is above $\rho_j$
and to the right of $c_j$ for $j=\ell,\dots,r-1$, and hence can see all of them.

We claim that point $p$ can also see $c_r$.
This holds because edge $(c_{r-1},c_r)$ has slope at most $s(k)$,
and (due to the chosen edge directions)
$x(c_{r-1})<x(v_{k+1})<x(c_r)$.  Therefore point $p$ is above $(c_{r-1},c_r)$,
and can be connected to both of them without intersection.  Also note that 
line segment $(p,c_r)$ therefore has a smaller slope than $(c_{r-1},c_r)$,
and in particular slope less than $s(k)$.

So point $(x(v_{k+1}),y')$ can see all vertices $c_\ell,\dots,c_r$,
and the value of $y^*=y(v_{k+1})$ is no bigger than $y'$.  We already
argued that edge $(v_{k+1},c_r)$ has slope less than $s(k)\leq s(k+1)$, so we
only must argue the slope of the unique other new outer-face edge 
$(c_\ell,v_{k+1})$.
Since $y'$ is the smallest integer $y$-coordinate above the point
where $\rho_\ell$
intersects the line $\{x=x(v_{k+1})\}$, we have
\begin{linenomath*}
\begin{equation}
\label{equ:y1}
y^*\leq y' \leq 
y(c_l) + \left(x(v_{k+1})-x(c_l)\right)\cdot s(k) + 1.
\end{equation}
\end{linenomath*}
By $x(v_{k+1})-x(c_l)\geq 1$
the slope of $(c_l,v_{k+1})$ is at most
\begin{linenomath*}
\begin{equation*}
\frac{y^*-y(c_l)}{x(v_{k+1})-x(c_l)}
\leq
s(k)+\frac{1}{x(v_{k+1})-x(c_l)} \leq s(k)+1 = s({k+1})
\end{equation*}
\end{linenomath*}
as desired.
\end{proof}

Vertex $v_n$ has $x$-coordinate at most $n-1$, and
the edge from $v_1$ to $v_n$ has slope at most $s(n)=n-3$.
This shows that the $y$-coordinate of $v_n$ 
is at most $2+(n-2) \cdot (n-3)$.  Since triangle $\{v_1,v_2,v_n\}$
bounds the drawing, this gives:

\begin{theorem}
Every planar graph has a planar straight-line drawing in an $n\times (2+(n-2)(n-3))$-grid
such that all vertices have distinct $x$-coordinates.
\end{theorem}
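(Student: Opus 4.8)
\medskip\noindent\emph{Proof plan.}
The plan is to assemble the construction developed above and simply track the height. First I would reduce to the case that $G$ is triangulated: add edges so that every face (including the outer one) is a triangle, build the drawing described below, and then delete the added edges from it. Deleting edges changes neither the vertex positions nor the grid, so the resulting drawing is still planar and straight-line, still lives in the same grid, and still has distinct $x$-coordinates. For the triangulated $G$ I would then fix a canonical ordering $v_1,\dots,v_n$, orient the edges as described (directing predecessors toward $v_k$ except for the edge to the rightmost predecessor), and let $x\colon V\to\{1,\dots,n\}$ be a topological order of the resulting acyclic orientation. Since $x$ is a bijection onto $\{1,\dots,n\}$, all $x$-coordinates are distinct and the width is exactly $n$; moreover $x(v_1)=1$ and $x(v_2)=n$, and because $v_1$ is the unique source there is a directed path from $v_1$ to every vertex, so in particular $x(v)>1$ for all $v\neq v_1$. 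Finally I would run the incremental placement: start with $v_1$ at $(1,2)$, $v_3$ at $(x(v_3),2)$, $v_2$ at $(n,1)$, and for $k\geq 3$ place $v_{k+1}$ at $(x(v_{k+1}),y^*)$, where $y^*$ is the smallest integer for which this point sees all predecessors $c_\ell,\dots,c_r$ of $v_{k+1}$. As argued above, such a $y^*$ exists — the upward rays from the $c_j$ are unobstructed because $x$ is monotone along the outer face of $G_k$ — and each step yields a planar straight-line drawing of $G_{k+1}$ respecting the prescribed $x$-coordinates; after $n$ steps we obtain a drawing of $G=G_n$.

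The remaining task is to bound the height, and here the work is entirely done by Lemma~\ref{lem:slopes1}: every edge on the outer face of the drawing of $G_k$ has slope at most $s(k)=k-3$. I would apply this with $k=n$. The outer face of $G=G_n$ is the triangle $\{v_1,v_2,v_n\}$, which in our drawing has $v_1$ at height $2$ and $v_2$ at height $1$, and this triangle bounds the whole drawing, so every vertex has $y$-coordinate between $1$ and $y(v_n)$. The edge $(v_1,v_n)$ is an outer-face edge of $G_n$, hence has slope at most $s(n)=n-3$; since $x(v_2)=n$ and $v_2\neq v_n$ we have $2\leq x(v_n)\leq n-1$, so the horizontal distance from $v_1$ to $v_n$ is at most $n-2$. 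Therefore $y(v_n)\leq y(v_1)+(x(v_n)-x(v_1))\cdot s(n)\leq 2+(n-2)(n-3)$, and since all $x$-coordinates lie in $\{1,\dots,n\}$ and all $y$-coordinates lie in $\{1,\dots,y(v_n)\}$, the drawing fits in an $n\times(2+(n-2)(n-3))$-grid. Undoing the edge additions from the first step then finishes the argument.

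I do not expect a real obstacle here, since the only nontrivial ingredient — the slope bound — is already in hand; the care points are bookkeeping. The one place worth double-checking is that the pre-assigned $x$-coordinates are genuinely compatible with a planar placement, i.e.\ that every $y^*$ in the incremental step actually exists; this is guaranteed by the monotonicity of $x$ along each outer face, which is forced by the chosen edge orientation (this is exactly what makes the upward rays from the $c_j$ empty). The second point to verify is that the bounding triangle is drawn with $v_1$ at height $2$ and $v_2$ at height $1$, so that no vertex can escape the height bound; both follow directly from the initialization of the construction.
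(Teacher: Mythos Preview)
Your proposal is correct and follows the paper's argument essentially line for line: reduce to the triangulated case, run the incremental construction with precomputed $x$-coordinates from a topological order, invoke Lemma~\ref{lem:slopes1} for the slope bound on $(v_1,v_n)$, and read off the height $2+(n-2)(n-3)$ from $x(v_n)\le n-1$ and $y(v_1)=2$, with the bounding triangle $\{v_1,v_2,v_n\}$ controlling all other vertices. The only difference is that you spell out the bookkeeping (triangulation step, existence of $y^*$, bijectivity of $x$) more explicitly than the paper does.
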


While this theorem per se is not useful for non-aligned drawings, we find
it interesting from a didactic point of view:  It proves that polynomial
coordinates can be achieved for straight-line drawings of planar graphs, 
and requires for this only the canonical ordering, but neither the 
properties of Schnyder trees \cite{Sch90} nor the details of how to ``shift''
that is 
needed for other methods using the canonical ordering (e.g.~\cite{CK97,FPP90}).
We believe that our bound on the height
is much too big, and that the true height is $o(n^2)$ and possibly $O(n)$.

\subsubsection{Non-aligned drawings}

We now modify the above construction slightly to achieve
distinct $y$-coordinates.
Define the exact same $x$-coordinates and place $v_1$ and $v_2$ as before.  
To place vertex $v_{k+1}$, let $y^*$ be the smallest $y$-coordinate such
that point $(x(v_{k+1}),y^*)$ can see all predecessors of $v_{k+1}$, and
such that none of $v_1,\dots,v_k$ is in row $\{y=y^*\}$. Clearly this
gives a non-aligned drawing.
It remains to bound how much this increases the height.

\begin{figure}[ht]
\hspace*{\fill}
\includegraphics[page=3,width=0.48\linewidth]{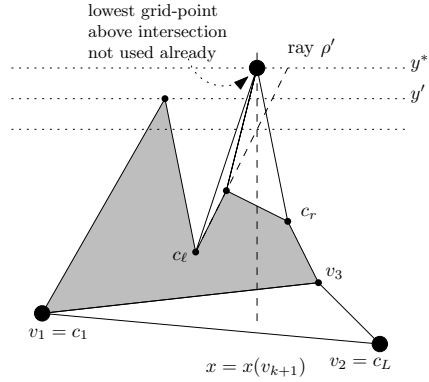}
\hspace*{\fill}
\caption{Finding a $y$-coordinate for $v_{k+1}$ that has not been
used by earlier vertices.}
\end{figure}

\begin{lemma}
\label{lem:slopes2}
Define $s'(k):=\sum_{i=1}^{k-2} i = \frac{1}{2}(k-1)(k-2)$ for $k\geq 3$.     
All edges on the outer-face of the constructed 
non-aligned drawing
 of $G_k$ have slope at most $s'(k)$ for $k
\geq 3$.
\end{lemma}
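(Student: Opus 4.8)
The plan is to follow the proof of Lemma~\ref{lem:slopes1} almost verbatim, replacing $s$ by $s'$ throughout, and then add one estimate that controls how far the non-aligned constraint pushes a new vertex upward. I would argue by induction on $k$. For the base case $k=3$, vertex $v_3$ is placed by the general rule, so its $y$-coordinate is the smallest value avoiding $\{y(v_1),y(v_2)\}=\{2,1\}$ from which both $v_1$ and $v_2$ are visible; this value is $3$. A one-line computation then shows that the contour edges $(v_1,v_3)$ and $(v_3,v_2)$, as well as $(v_1,v_2)$, have slope at most $1=s'(3)$. (It is exactly this ``bumping $v_3$ off of $v_1$'s row'' that forces the bound to start at $s'(3)=1$ rather than $0$.)

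For the inductive step, assume the bound for $G_k$ and add $v_{k+1}$ with leftmost and rightmost predecessors $c_\ell$ and $c_r$. Exactly as in Lemma~\ref{lem:slopes1}, I take the ray $\rho_\ell$ of slope $s'(k)$ out of $c_\ell$; since the contour $c_1,\dots,c_L$ is $x$-monotone (the edge directions make $x$ increase along it) and by induction each of its edges has slope at most $s'(k)$, the whole contour to the right of $c_\ell$ lies weakly below $\rho_\ell$, and hence every point of the vertical line $\{x=x(v_{k+1})\}$ at height at least $y'$ --- the smallest integer strictly above the intersection of $\rho_\ell$ with that line --- can see all of $c_\ell,\dots,c_r$; the sub-argument for $c_r$ is the same as in Lemma~\ref{lem:slopes1} and also shows that edge $(v_{k+1},c_r)$ receives slope less than $s'(k)\le s'(k+1)$. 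As in inequality~(\ref{equ:y1}), moreover, $y'\le y(c_\ell)+(x(v_{k+1})-x(c_\ell))\cdot s'(k)+1$.

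The only genuinely new point is to bound $y^*:=y(v_{k+1})$. Since $v_{k+1}$ is placed at the smallest value that is visible and differs from all of $y(v_1),\dots,y(v_k)$, and every value at least $y'$ is visible, we have $y^*\le y'+m$ where $m$ is the number of indices $j\le k$ with $y(v_j)\ge y'$. I claim $m\le k-2$: the ray $\rho_\ell$ has positive slope $s'(k)\ge 1$ and $x(v_{k+1})>x(c_\ell)$, so $\rho_\ell$ is strictly above height $y(c_\ell)$ at $x=x(v_{k+1})$, giving $y(c_\ell)<y'$; also $y(v_2)=1<y'$; and $c_\ell\ne v_2$ because $v_2=c_L$ while $c_\ell$ is the leftmost of at least two contour vertices, so $\ell<L$. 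Thus at least two of the $k$ used $y$-coordinates lie below $y'$. Combining, the slope of $(c_\ell,v_{k+1})$ is at most
\[
\frac{y'+(k-2)-y(c_\ell)}{x(v_{k+1})-x(c_\ell)} \le s'(k)+\frac{(k-2)+1}{x(v_{k+1})-x(c_\ell)} \le s'(k)+(k-1)=s'(k+1),
\]
using $x(v_{k+1})-x(c_\ell)\ge 1$, which finishes the induction together with the bound on $(v_{k+1},c_r)$. The main obstacle is precisely the estimate $m\le k-2$: the naive bound $m\le k$ gives slope increment $k+1$, overshooting $s'(k+1)-s'(k)=k-1$, so one must exhibit two rows that are automatically skipped, and recognizing that $c_\ell$'s row is one of them is what pins down the constant $s'(3)=1$.
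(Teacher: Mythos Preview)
Your proof is correct and follows essentially the same approach as the paper: induction on $k$, re-using the visibility argument from Lemma~\ref{lem:slopes1} with $s'$ in place of $s$, and then bounding how many already-used rows can lie at height $\geq y'$. The only (inessential) difference is in the choice of the two ``automatically skipped'' rows: the paper observes that $y'\geq 3$ (from $y(c_\ell)\geq 2$ and $s'(k)\geq 1$) and hence excludes $v_1$ and $v_2$, whereas you exclude $c_\ell$ and $v_2$; both choices yield $m\leq k-2$ and the same final inequality.
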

\begin{proof}
The claim clearly holds for $k=3$, since $v_3$ is placed with $y$-coordinate 3
and therefore $(v_1,v_3)$ has slope at most $1=s(3)$.
Now let $k\geq 3$ and consider the time when adding $v_{k+1}$ with predecessors 
$c_\ell,\dots,c_r$, and define $\rho'$ to be the ray of slope $s'(k)$
emanating from $c_\ell$.  Let $y'$ be the smallest integer
coordinate above the intersection of $\rho'$ with the vertical line $\{x=x(v_{k+1})\}$.
As in Lemma~\ref{lem:slopes1}, one argues that $p'=(x(v_{k+1}),y')$ can see all
of $c_\ell,\dots,c_r$.  

We may or may not be able to use point $p'$ for $v_{k+1}$, depending on
whether some other vertices were in the row $\{y=y'\}$.
Observe that $y'\geq 3$, because $y(c_\ell)\geq 2$ and $s'(k)\geq 1$.  
Therefore neither $v_1$ nor $v_2$ had $y$-coordinate $y'$, which leaves
at most $k-2$ vertices that could be in row $y'$ or higher.
In particular therefore
\begin{linenomath*}
\begin{equation}
\label{equ:y2}
y^* \leq y'+(k-2) \leq y(c_\ell) + \left(x(v_{k+1})-x(c_\ell)\right)\cdot s'(k) + 1+(k-2)
\end{equation}
\end{linenomath*}
Reformulating as before shows that the slope of $(c_\ell,v_{k+1})$ is at most
\begin{linenomath*}
\begin{equation*}
\frac{y^*-y(c_l)}{x(v_{k+1})-x(c_l)}
\leq
s'(k)+\frac{k-1}{x(v_{k+1})-x(c_l)} \leq s'(k)+k-1 = s'({k+1}).
\end{equation*}
\end{linenomath*}
\end{proof}

Edge $(v_1,v_n)$ has slope at most $\frac{1}{2}(n-1)(n-2)$.
Since $x(v_n)-x(v_1)\leq n-2$ and $y(v_1)=2$, therefore
the height is at most $2+\frac{1}{2}(n-1)(n-2)^2$. 

\begin{theorem}
Every planar graph has a 
non-aligned straight-line drawing in
an $n\times \left(2+\frac{1}{2}(n-1)(n-2)^2\right)$-grid.
\end{theorem}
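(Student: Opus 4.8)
The plan is to combine the two ingredients that are already in place: the explicit $x$-coordinate construction (a topological order of the acyclic orientation derived from the canonical ordering, giving a permutation of $\{1,\dots,n\}$ as $x$-coordinates), and the slope bound of Lemma~\ref{lem:slopes2} for the modified placement rule that forbids reusing a $y$-coordinate. The drawing itself has already been described just before the statement: place $v_1$ at $(1,2)$, $v_3$ at $(x(v_3),2)$, $v_2$ at $(n,1)$, and for each $k\geq 3$ place $v_{k+1}$ at $(x(v_{k+1}),y^*)$ where $y^*$ is the smallest integer such that the point sees all predecessors of $v_{k+1}$ \emph{and} no earlier vertex lies in row $y^*$. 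So the only remaining work is to (i) confirm this is a valid planar non-aligned straight-line drawing, and (ii) bound the height.

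First I would verify correctness. Planarity and the fact that each $c_j$ with $\ell\le j\le r$ has an empty upward ray follow exactly as in the distinct-$x$-coordinate construction, since the $x$-coordinates and the edge orientation are unchanged; the only new constraint on $y^*$ is ``avoid the at-most-$k$ already-used rows'', which can always be satisfied by going high enough, so $y^*$ is well defined. Distinctness of $x$-coordinates is immediate because they form a permutation of $\{1,\dots,n\}$; distinctness of $y$-coordinates is exactly what the new placement rule enforces. Hence the drawing is non-aligned.

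Next, the height. The bounding triangle is $\{v_1,v_2,v_n\}$, so the height is governed by $y(v_n)$. By Lemma~\ref{lem:slopes2}, all outer-face edges of $G_n$ — in particular the edge $(v_1,v_n)$ — have slope at most $s'(n)=\tfrac12(n-1)(n-2)$. Since $v_1$ sits at $y$-coordinate $2$ and $x(v_n)-x(v_1)\le n-2$ (because $x(v_2)=n$ and all $x$-coordinates are distinct, so $x(v_n)\le n-1$ while $x(v_1)=1$), we get
\begin{linenomath*}
\begin{equation*}
y(v_n)\;\le\; y(v_1) + \bigl(x(v_n)-x(v_1)\bigr)\cdot s'(n) \;\le\; 2 + (n-2)\cdot\tfrac12(n-1)(n-2) \;=\; 2+\tfrac12(n-1)(n-2)^2 .
\end{equation*}
\end{linenomath*}
Since $v_1,v_2$ have $y$-coordinates $2$ and $1$, no vertex has $y$-coordinate above $y(v_n)$, so the grid has height at most $2+\tfrac12(n-1)(n-2)^2$ and width $n$, as claimed. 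Finally, I would recall that we assumed $G$ triangulated; for a general planar graph we first triangulate, run the construction, and then delete the added edges, which changes neither the vertex positions nor the non-aligned property.

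The only genuinely load-bearing step is Lemma~\ref{lem:slopes2} itself, which is already proved in the excerpt; everything above it is bookkeeping. The one subtlety worth stating carefully is why at most $k-2$ (rather than $k$) earlier vertices can block the target row — namely that $y'\ge 3$ while $v_1,v_2$ occupy rows $2$ and $1$ — but that too is handled inside the proof of Lemma~\ref{lem:slopes2}, so for the theorem itself there is essentially no obstacle: it is a direct corollary of Lemma~\ref{lem:slopes2} plus the $x$-coordinate range.
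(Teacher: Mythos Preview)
Your approach is exactly the paper's: invoke Lemma~\ref{lem:slopes2} to bound the slope of $(v_1,v_n)$ by $\tfrac12(n-1)(n-2)$, then combine $x(v_n)-x(v_1)\le n-2$ with $y(v_1)=2$ to get the stated height. One slip in your description of the construction: you place $v_3$ at $(x(v_3),2)$, which puts it in the same row as $v_1$ and so is \emph{not} non-aligned. In the modified construction the row-avoidance rule already applies when placing $v_3$ (the paper only says ``place $v_1$ and $v_2$ as before''), forcing $v_3$ into row $3$; this is also precisely the base case used in the proof of Lemma~\ref{lem:slopes2}. With that correction your argument is complete and matches the paper.
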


Comparing this to Theorem~\ref{thm:non-aligned}, we see that
the aspect ratio is much worse, but 
the area is smaller.
We suspect that the method results in a smaller height than the 
proved upper bound: Equation~\eqref{equ:y2} is generally not
tight, and so a smaller slope-bound
(implying a smaller height) is likely to hold.  

\subsection{The special case of nested triangles}

We now turn to non-aligned drawings of a special graph class.
Define a {\em nested-triangle graph} $G$ as follows.  $G$ has $3k$
vertices for some $k\geq 1$, say $\{u_i,v_i,w_i\}$ for $i=1,\dots,k$.
Vertices $\{u_i,v_i,w_i\}$ form a triangle (for $i=1,\dots,k$).
We also have paths $u_1,u_2,\dots,u_k$ as well as $v_1,v_2,\dots,v_k$
and $w_1,w_2,\dots,w_k$.  With this the graph is 3-connected; we assume
that its outer-face is $\{u_1,v_1,w_1\}$.  All interior faces that
are not triangles may or may not have a diagonal in them, and there are
no restrictions on which diagonal (if any).
Nested-triangle graphs are of interest in graph drawing because they
are the natural lower-bound graphs for the area of straight-line drawings 
\cite{DLT84}.

\begin{theorem}
Any nested-triangle graph with $n=3k$ vertices has a
non-aligned
straight-line drawing in an $n\times (\frac{4}{3}n-1)$-grid.
\end{theorem}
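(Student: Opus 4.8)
The plan is to draw the nested-triangle graph $G$ with $n=3k$ vertices layer by layer, from the innermost triangle $\{u_k,v_k,w_k\}$ outward to the outer triangle $\{u_1,v_1,w_1\}$, so that each of the three paths $u_1,\dots,u_k$, $v_1,\dots,v_k$, $w_1,\dots,w_k$ moves monotonically in a ``wedge''. First I would fix the $x$-coordinates: I want all $3k$ vertices to use distinct columns in $\{1,\dots,n\}$, so I would assign the $u$-path the columns $1,\dots,k$ (say $x(u_i)$ increasing as $i$ decreases, so that $u_1$ is leftmost), give the $w$-path the columns $2k+1,\dots,3k$ on the far right, and give the $v$-path the middle columns $k+1,\dots,2k$; within each path the order along the $x$-axis is chosen so that as we go outward the triangles genuinely nest (the outer triangle's three vertices are the extreme-left, extreme-right, and a middle column). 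The width is then exactly $n$. The main work is to choose $y$-coordinates: place the innermost triangle near the bottom, and each time we add the next triangle $\{u_{i-1},v_{i-1},w_{i-1}\}$ around $\{u_i,v_i,w_i\}$, we push the three new vertices outward and slightly up, keeping all rows distinct.

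The key steps, in order, are: (1) set up the column assignment above and verify that with it, the ``outward'' triangles really do nest planarly — i.e.\ that connecting $u_{i-1}$ to $u_i$, $v_{i-1}$ to $v_i$, $w_{i-1}$ to $w_i$, and closing each triangle, produces a planar drawing provided each new triangle is drawn strictly ``lower and wider'' than the previous one; (2) show that we can always route the triangle edges and path edges with straight lines without crossings — here the crucial geometric fact is that between consecutive nested triangles there is an annular region, and as long as each vertex of triangle $i-1$ is placed below and horizontally outside the corresponding vertex of triangle $i$, the straight segments stay inside that annulus; (3) assign $y$-coordinates greedily from the inside out, giving the innermost triangle $y$-values in $\{1,2,3\}$ (or the smallest available distinct rows), and for each subsequent triangle choosing the three smallest still-unused rows that lie strictly above all previously placed vertices — this is where the ``$\frac{4}{3}n-1$'' comes from, since we add roughly one ``wasted'' row per triangle on top of the three genuinely needed rows; and (4) possibly handle the optional diagonals inside the quadrilateral faces, observing that since each such quadrilateral is drawn as a convex (or at least ``staircase-convex'') region, either diagonal can be added as a straight segment without crossings, so diagonals impose no extra constraint.

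The height analysis is the heart of the bound. With $k$ triangles, if each triangle after the first consumed exactly $3$ new rows we would get height $3k=n$, but forcing distinct rows and strict upward progress means that when we place triangle $i-1$ we may need to skip a row because of where a vertex of triangle $i$ sits, costing at most one extra row per step; more carefully, I would argue that after placing the first triangle in rows $\{1,2,3\}$, each of the remaining $k-1$ triangles needs its three vertices in three new distinct rows all above the top of the previous triangle, and a charging argument shows the top row used is at most $3 + 3(k-1) + (k-1) = 4k-1 = \frac{4}{3}n-1$. So the height is at most $\frac{4}{3}n-1$, the width is $n$, and distinctness of rows and columns is immediate from the construction.

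The main obstacle I expect is step (2): proving rigorously that straight-line path edges $u_{i-1}u_i$ (and the analogous $v$ and $w$ edges), together with the triangle edges, never cross across several layers at once. It is not enough that consecutive triangles nest; one must ensure that the segment from $u_{i-1}$ to $u_i$ does not dip back into the interior occupied by triangle $i+1$ or poke out past triangle $i-2$. The cleanest way around this is to impose a \emph{monotone-wedge} invariant: maintain that the $u$-vertices, read from inside out, have strictly decreasing $x$ and strictly increasing $y$ (so they form an $x$-monotone, $y$-monotone staircase opening to the upper left), symmetrically for the $w$-vertices opening to the upper right, and for the $v$-vertices opening straight up; then each annular region between consecutive triangles is a union of three ``monotone strips'', and a straight segment between corresponding consecutive vertices provably stays inside its strip. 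Establishing this invariant is inductive and geometric rather than calculational, so I would state it explicitly as the load-bearing lemma before doing the row bookkeeping.
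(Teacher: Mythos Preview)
Your proposal has a genuine geometric flaw that makes the construction impossible as described. You place the innermost triangle in rows $\{1,2,3\}$ and then, for each subsequent (outer) triangle, choose ``the three smallest still-unused rows that lie strictly above all previously placed vertices''. But a triangle whose three vertices all lie strictly above the inner triangle cannot contain that inner triangle: the $y$-range of the outer triangle must include the $y$-range of the inner one. Your monotone-wedge invariant encodes exactly this mistake --- you send all three paths upward as you go from inside out, so every outer triangle sits entirely above the triangle it is supposed to enclose. Consequently step~(2) cannot be carried out at all (there is no planar straight-line drawing with those relative positions), and the height bookkeeping in step~(3), whatever charging argument you had in mind, is moot.

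The paper avoids all of this by simply writing down explicit coordinates:
$u_i=(i,i)$, $v_i=(3k+1-i,k+i)$, $w_i=(k+i,4k+1-2i)$. The crucial difference from your scheme is that as one moves outward, \emph{two} of the paths go down in $y$ (the $u$- and $v$-paths) while the third goes up (the $w$-path, in steps of~$2$); this is precisely what allows each outer triangle to straddle the inner ones. The extra $k-1$ rows in the height come from the step-$2$ spacing along the $w$-path, not from ``skipping one row per layer''. Planarity is then dispatched in one line by observing that every interior quadrilateral is drawn strictly convex (with one degenerate case handled by a renaming), so either diagonal fits. If you want to repair your approach, the minimal fix is to let one of the three wedges open \emph{downward}; at that point you are essentially re-deriving the paper's explicit placement, and writing the coordinates directly is both shorter and easier to verify.
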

\begin{proof}
The 4-cycle $\{w_k,v_k,v_{k-1},w_{k-1}\}$ may or may not have a diagonal
in it; after possible exchange of $w_1,\dots,w_k$ and $v_1,\dots,v_k$
we assume that there is no edge between $v_{k-1}$ and $w_k$.
For $i=1,\dots,k$, place $u_i$ at $(i,i)$,
vertex $v_i$ at $(3k+1-i, k+i)$, and $w_i$ at 
$(k+i,4k+1-2i)$ (see Figure~\ref{fig:stacked}).
The $x$- and $y$-coordinates are all distinct. The $x$-coordinates range from 1 to $n$,
and the maximal $y$-coordinate is $4k-1 = \frac{4}{3}n-1$.
It is easy to check that all interior faces are drawn strictly
convex, with the exception of $\{v_k,v_{k-1},w_{k-1},w_k\}$
which has a $180^\circ$ angle at $v_k$, but our choice of naming
ensured that there is no edge $(v_{k-1},w_k)$.
Thus any diagonal inside these 4-cycles
can be drawn without overlap.
Since $G$ is planar, two edges joining vertices of different triangles cannot cross.
Thus $G$ is drawn without crossing in an $n \times \left( \frac{4}{3}n-1 \right)$-grid.
\end{proof}

\begin{figure}
\begin{center}
\includegraphics[scale=0.7]{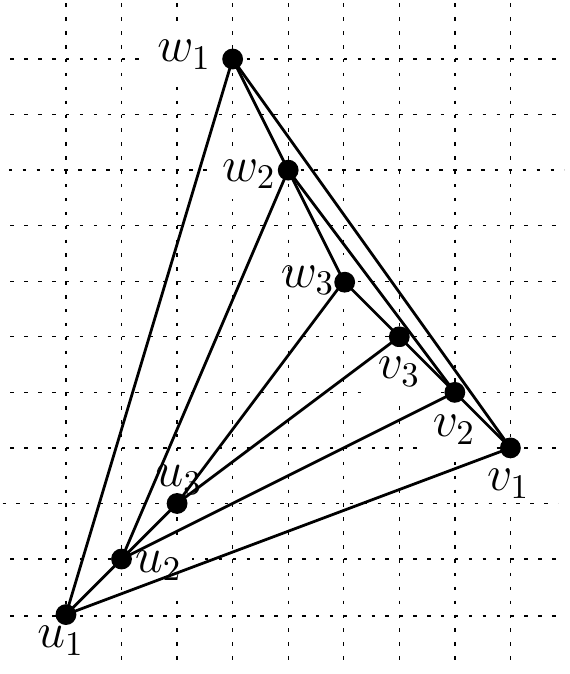}
\caption{A non-aligned straight-line drawing of a nested-triangle 
graph with $k=3$ on an $9 \times 11$-grid.
}
\label{fig:stacked}
\end{center}
\end{figure}

In particular, notice that the octahedron is a nested-triangle graph
(for $k=2$) and this construction gives a  
non-aligned straight-line drawing in a 
$6\times 7$-grid.  This is clearly optimal since it has no straight-line
rook-drawing \cite{ABDP15}.

We conjecture that this construction gives the minimum-possible height
for nested-triangle graphs among all non-aligned straight-line drawings.

\section{Rook-drawings with bends} 

We now construct rook-drawings with bends; as before
we do this only for triangulated graphs.
The main idea is to find rook-drawings with
only 1 bend for 4-connected triangulated graphs, then convert
any graph into a 4-connected triangulated graph by subdividing few
edges and re-triangulating, and finally argue that the drawing for
it, modified suitably, gives a rook-drawing with few bends.

We need a few definitions first.  Fix a triangulated graph $G$.
A {\em separating triangle} is a triangle
that has vertices both strictly inside and strictly outside the triangle.
$G$ is {\em 4-connected} (i.e., cannot be made disconnected
by removing 3 vertices) if and only if it has no separating triangle.
A {\em filled triangle} \cite{BBM-GD99} of $G$ is a triangle that has
vertices strictly inside.  A triangulated graph 
has at least one filled triangle (namely, the outer-face)
and every separating triangle is also a filled triangle.
We use $f_G$ to denote the number of filled triangles of
the graph $G$.

A {\em rectangle-of-influence (RI) drawing} is a 
straight-line drawing such that for any edge $(u,v)$, the
minimum axis-aligned rectangle containing $u$ and $v$
is {\em empty}, i.e. contains no other vertex of the drawing 
in its relative interior~\footnotemark~.
The following is known:
\footnotetext{
In the literature there are four kinds of RI-drawings, 
depending on whether points on the boundary of the rectangle
are allowed or not (open vs.~closed RI-drawings), 
and whether an edge $(u,v)$ must exist if $R(u,v)$ 
is empty (strong vs.~weak
RI-drawings).  The definition here corresponds to open weak RI-drawings.
}

\begin{figure}
\begin{center}
\begin{subfigure}[b]{0.3\linewidth}
\includegraphics[scale=0.43,page=7]{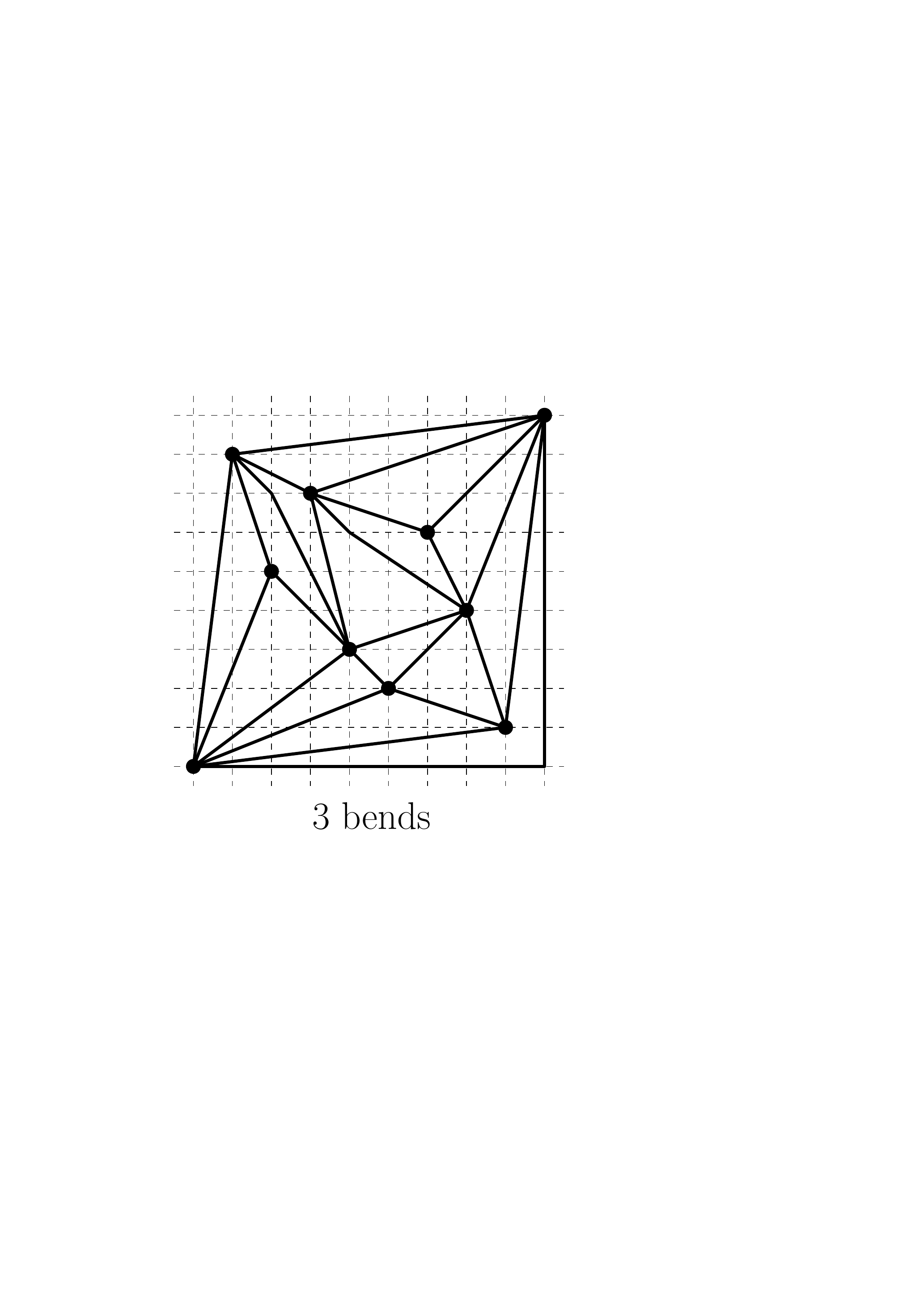} 
\caption{A triangulated graph $G$; $e$ is one of the edges of its outer-face.} 
\end{subfigure}
\hfill
\begin{subfigure}[b]{0.3\linewidth}
\includegraphics[scale=0.43,page=15,trim=0 20 0 0,clip]{subdivision.pdf} ~~
\caption{An RI-drawing of $G-e$ satisfying the conditions of Theorem~\ref{thm:BBM}.}
\label{fig:RI_example}
\end{subfigure}
\hfill
\begin{subfigure}[b]{0.3\linewidth}
\includegraphics[scale=0.5,page=4]{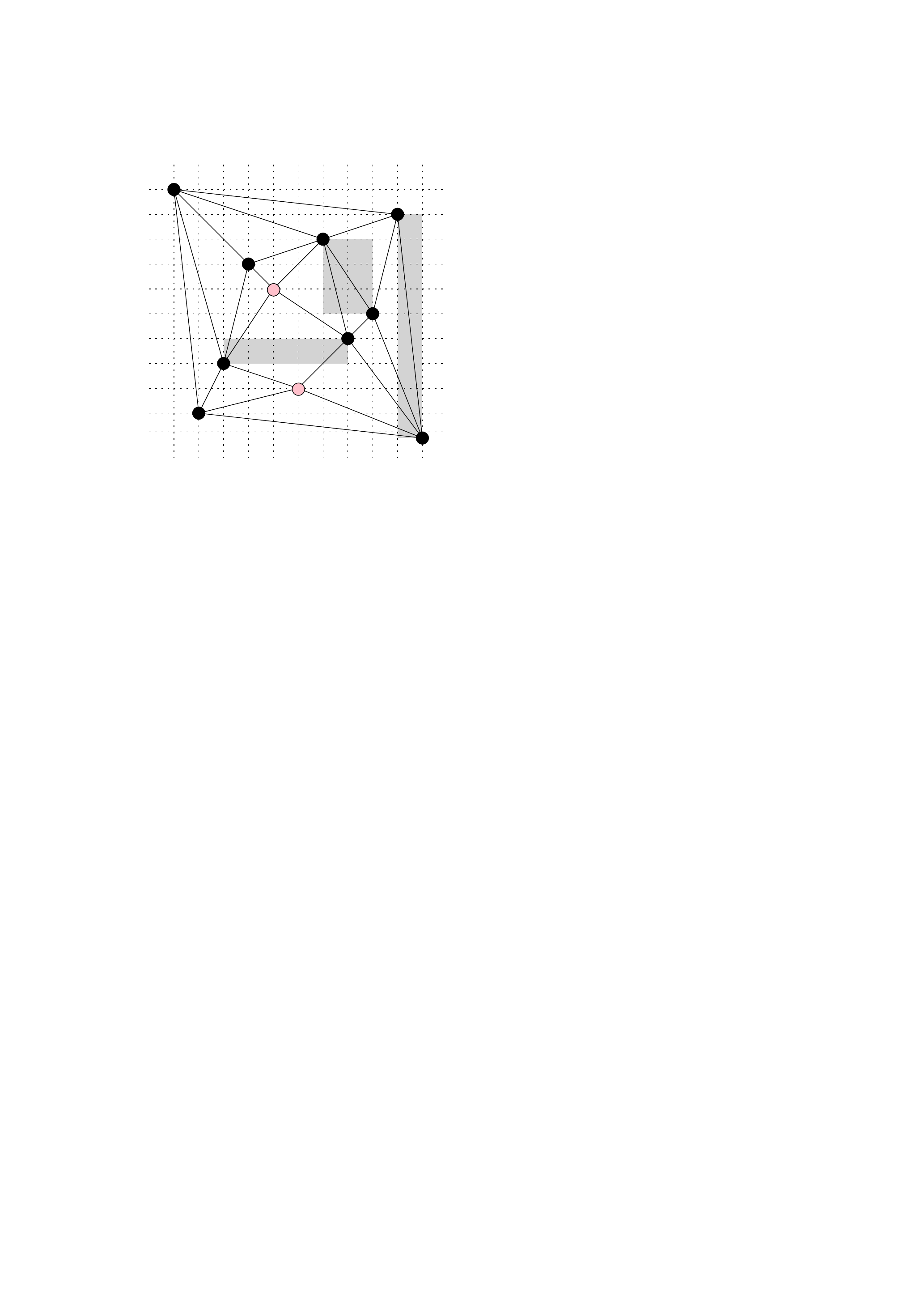} 
\caption{Combining two RI-drawings if all separating triangles contain $(u,w)$.  }
\label{fig:merge_RI}
\end{subfigure}
\caption{RI-drawings.}
\end{center}
\end{figure}


\begin{theorem}[\cite{BBM-GD99}]
\label{thm:BBM}
Let $G$ be a triangulated 4-connected graph and let $e$ be an edge on the
outer-face.  Then $G-e$ has a planar RI-drawing.  

Moreover, the drawing is non-aligned and on an $n\times n$-grid, 
the ends of $e$ are at $(1,n)$ and $(n,1)$,
and the other two vertices on the outer-face are at $(2,2)$ and $(n-1,n-1)$.
\end{theorem}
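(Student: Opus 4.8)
\medskip
\noindent\textbf{Proof sketch (plan).}
The plan is to read the drawing off a combinatorial structure on $G-e$. I would first normalize the instance: the edge $e=(u,v)$ of the outer triangle $\{u,v,w\}$ bounds a unique interior triangle $\{u,v,x\}$, and $x$ is an interior vertex of $G$ (otherwise $x=w$ and $n=3$, contradicting 4-connectivity). Deleting $e$ turns the outer face into the $4$-cycle $u\,x\,v\,w$, keeps every interior face a triangle, and --- because $G$ has no separating triangle --- leaves $G-e$ with no separating triangle. Hence $G-e$ is an irreducible triangulation of a quadrangle with corner vertices $u,x,v,w$, and these four will become the four grid corners, with $u$ and $v$ (the ends of $e$) at the opposite corners $(1,n)$ and $(n,1)$.

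The main tool is that every irreducible triangulation of a quadrangle admits a transversal structure (equivalently, a regular edge labeling): the interior edges are $2$-coloured and oriented so that around each interior vertex the incident edges form, clockwise, four non-empty monochromatic blocks in the cyclic pattern $\text{out}_1,\text{out}_2,\text{in}_1,\text{in}_2$, while at each corner the interior edges are monochromatic and uniformly oriented. Completing colour $1$ by the outer cycle (oriented with $u$ a source and $v$ a sink) gives a bipolar orientation $\sigma_1$. Taking the $x$-coordinates from a topological order of $\sigma_1$ and the $y$-coordinates from a topological order of the colour-$2$ bipolar orientation would already give a planar straight-line drawing, but one in which the four corner vertices sit near the midpoints of the grid's sides rather than at its corners. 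To land the corners at $(1,n),(n,1),\dots$ I would instead use the $45^\circ$-rotated orientation $\sigma$ obtained by reversing colour $1$ and keeping colour $2$ (orienting the outer cycle with $v$ a source and $u$ a sink): around each interior vertex the edges of $\sigma$ now form exactly one in-block and one out-block (reversed-$\text{out}_1$ merges with $\text{in}_2$, and $\text{in}_1$ with $\text{out}_2$), so $\sigma$ is again a bipolar orientation, this time from $v$ to $u$, and acyclic by the same mechanism that makes $T_1^{-1}\cup T_2\cup T_3$ acyclic for Schnyder woods (cf.\ the footnote above). Now take the $x$-coordinates from a topological order of $\sigma_1$ and the $y$-coordinates from one of $\sigma$; each is a bijection onto $\{1,\dots,n\}$, so the drawing lies in the $n\times n$-grid and is non-aligned, with $x(u)=1,x(v)=n,y(v)=1,y(u)=n$. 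Since $w$ and $x$ each have all interior edges of one colour, in the other orientation each is adjacent only to corners and so can be slid into position; choosing the two topological orders appropriately puts $\{w,x\}$ at coordinates $2$ and $n-1$ on both axes, giving exactly $(2,2)$ and $(n-1,n-1)$ (after a reflection if needed).

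What remains --- and what I expect to be the main obstacle --- is to prove that the drawing is planar and that every edge has an empty bounding box. Planarity should follow from the standard correspondence between (reorientations of) a transversal pair of bipolar orientations and planar straight-line drawings. For the rectangle-of-influence property I would argue by cases on the colour of an edge $(p,q)$, using the four-block condition: a colour-$1$ edge is monotone in $x$, and the blocks around its two endpoints force every vertex that lies strictly between them in $x$-coordinate to lie strictly outside the $y$-interval spanned by $p$ and $q$, so the bounding box stays empty; colour-$2$ edges and the four outer edges are handled symmetrically. Both this verification and the precise corner placement are delicate and come not from one trick but from pushing on the local consequences of the transversal structure. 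A more incremental alternative would be to use a canonical ordering of the $4$-connected triangulation, adding vertices in chains on fresh rows and columns while keeping the outer boundary a monotone staircase; there the obstacle becomes showing that the shift/placement rule preserves emptiness of all bounding boxes.
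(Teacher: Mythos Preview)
The paper does not prove this theorem at all: it is quoted from \cite{BBM-GD99}, with only the remark that the corner-placement part ``can easily be inferred from the construction'' and a pointer to the exposition in \cite{BD-31order}. So there is no in-paper proof to compare against; you are supplying strictly more than the paper does.

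Your plan is the right framework and is essentially how the cited references proceed: delete $e$, recognise $G-e$ as an irreducible triangulation of a quadrangle, endow it with a transversal structure (regular edge labeling), and read off the two coordinates from two bipolar orientations. Your analysis of the corner placement is also correct: the two ``monochromatic'' corners $w$ and $x$ have no incident interior edges of one colour, so in the corresponding order they are constrained only by the two outer edges and can indeed be slid to positions $2$ and $n{-}1$.

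The one place to be careful is the step you flag yourself. An \emph{arbitrary} pair of topological orders of $\sigma_1$ and $\sigma$ need not give an RI drawing: if a vertex $z$ is incomparable to both endpoints of an edge in one of the orientations, a generic linear extension can drop $z$ inside the rectangle. The actual constructions in \cite{BBM-GD99} and \cite{BD-31order} use specific linear extensions (coming from the canonical $4$-connected ordering, or equivalently from the left/right structure of the planar bipolar orientation), and the emptiness argument leans on the planar-separation properties of those particular orders, not merely on acyclicity. Your sketch ``the blocks force $z$ outside the $y$-interval'' is morally right but only goes through once those orders are pinned down. The incremental canonical-ordering alternative you mention at the end is in fact closer to how \cite{BBM-GD99} originally argues it.
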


Figure~\ref{fig:RI_example} illustrates such a drawing of a graph. 
The latter part of the claim is not specifically stated in \cite{BBM-GD99},
but can easily be inferred from the construction (see also a 
simpler exposition in \cite{BD-31order}).

RI-drawings are useful because they can
be deformed (within limits) without introducing crossings.
We say that two drawings $\Gamma$ and $\Gamma'$ of a graph have
{\em the same relative coordinates} if for any two vertices $v$ and $w$,
we have $x_\Gamma(v)<x_\Gamma(w)$ if and only if
$x_{\Gamma'}(v)<x_{\Gamma'}(w)$,
and $y_\Gamma(v)<y_\Gamma(w)$ if and only if
$y_{\Gamma'}(v)<y_{\Gamma'}(w)$, where $x_\Gamma(v)$ denotes the
$x$-coordinate of $v$ in $\Gamma$, etc.
The following result appears to be folklore; we sketch a proof for completeness.

\begin{observation}
\label{obs:RI_deform}
Let $\Gamma$ be an RI-drawing.  If $\Gamma'$ is a straight-line drawing with
the same relative coordinates as $\Gamma$, then $\Gamma'$ is an
RI-drawing, and it is planar if and only if $\Gamma$ is.
\end{observation}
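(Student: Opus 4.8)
I would prove Observation~\ref{obs:RI_deform} by showing the contrapositive of the key implication: if $\Gamma'$ has the same relative coordinates as $\Gamma$ but fails to be a planar RI-drawing, then $\Gamma$ also fails. There are two things to verify: (i) the RI-property is preserved, and (ii) planarity is preserved in both directions. The whole argument rests on a single combinatorial fact: whether a point lies inside the axis-aligned bounding rectangle $R(u,v)$ of two other points depends \emph{only} on the relative order of $x$- and $y$-coordinates, not on the actual values. So I would first isolate this as the core claim.

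**Key steps.** First, observe that for any three vertices $u,v,w$, the vertex $w$ lies in the open rectangle $R(u,v)$ in $\Gamma$ if and only if $x(u)<x(w)<x(v)$ or $x(v)<x(w)<x(u)$, and simultaneously the analogous strict inequalities hold for the $y$-coordinates. Since $\Gamma$ and $\Gamma'$ have the same relative coordinates, these betweenness conditions hold in $\Gamma$ exactly when they hold in $\Gamma'$. Hence $R_\Gamma(u,v)$ is empty (in the open sense, ignoring boundary points) iff $R_{\Gamma'}(u,v)$ is empty. Applying this to every edge $(u,v)$ shows $\Gamma'$ is an RI-drawing. Second, for planarity I would argue that $\Gamma$ can be continuously deformed into $\Gamma'$ through drawings all having the same relative coordinates: move the $x$-coordinates monotonically from their values in $\Gamma$ to those in $\Gamma'$ (possible since both induce the same linear order on vertices, so one can interpolate, e.g.\ linearly, keeping all $x$-coordinates distinct and in the same order throughout), then do the same for the $y$-coordinates. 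The subtle point is that a crossing cannot appear or disappear during such a deformation: if two independent edges $(a,b)$ and $(c,d)$ cross at some moment, then at that moment one of $a,b$ lies in the bounding box of the other edge (indeed, a crossing of two segments forces each segment's bounding box to contain an endpoint of the other), i.e.\ the RI-property is violated for $(c,d)$ or $(a,b)$ — but we just showed the RI-property is an invariant of the relative coordinates and hence holds throughout the deformation, a contradiction. So no crossing is created or destroyed, and $\Gamma'$ is planar iff $\Gamma$ is.

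**Main obstacle.** The routine part is the betweenness characterization; the part requiring care is the claim that a crossing of two straight-line edges forces a bounding-box containment, and that this stays true along the deformation even when edges share an endpoint. For edges sharing an endpoint $v$, they cannot ``cross'' in the interior in a planar-violating way unless they overlap, and overlap of two segments from $v$ with distinct other endpoints $a,b$ again forces $a \in R(v,b)$ or $b\in R(v,a)$ (the three points are collinear with one between the other two in both coordinates) — so the same invariant rules it out. I would state this containment fact as a small geometric sublemma and keep its proof to one or two sentences, since belaboring it is exactly what a folklore result does not need. I should also note explicitly that ``RI-drawing'' here means open weak, so that boundary coincidences are irrelevant — but since both drawings are implicitly assumed non-aligned wherever this observation is applied (and in any case the relative-coordinate hypothesis preserves which pairs share a coordinate), this causes no trouble.
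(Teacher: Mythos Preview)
Your argument for part~(i), that the RI-property transfers, is correct and matches what the paper cites from Liotta et al.

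Your argument for part~(ii), planarity, has a genuine gap. The geometric sublemma you rely on is false: it is \emph{not} true that whenever two segments cross, an endpoint of one lies in the (open) bounding rectangle of the other. A concrete counterexample is $a=(0,4)$, $b=(10,6)$, $c=(4,0)$, $d=(6,10)$: the segments $ab$ and $cd$ cross at $(5,5)$, yet $c,d\notin R(a,b)=(0,10)\times(4,6)$ and $a,b\notin R(c,d)=(4,6)\times(0,10)$. So this four-point configuration is simultaneously an RI-drawing and non-planar, and your claimed implication ``crossing $\Rightarrow$ RI-violation'' would, if true, prove that every RI-drawing is planar --- which is strictly stronger than the observation and is false.

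The paper avoids this by giving a direct order-theoretic characterization of crossings \emph{in an RI-drawing}: two edges $(u,v)$ and $(w,z)$ with all rectangles empty cross if and only if, up to renaming, $x(w)\le x(u)\le x(v)\le x(z)$ and $y(u)\le y(w)\le y(z)\le y(v)$. This condition depends only on relative coordinates, so crossings are preserved and reflected. (Check that the counterexample above fits exactly this pattern with $(u,v)=(c,d)$ and $(w,z)=(a,b)$.) If you want to rescue the deformation idea, the right invariant is not ``no crossing ever occurs'' but rather ``no \emph{tangency} ever occurs'': a crossing can only be created or destroyed when an endpoint of one edge lies on the other segment, and \emph{that} does force the endpoint into the (relative interior of the) other edge's rectangle, contradicting the RI-property. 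But that is a different and more delicate argument than the one you sketched.
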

\begin{proof}
The claim on the RI-drawing was shown by Liotta et al.~\cite{LLMW98}.
It remains to argue planarity.  Assume that edge $(u,v)$ crosses edge $(w,z)$
in an RI-drawing.  Since all rectangles-of-influence are empty, this 
happens if and only if 
(up to renaming) we have $x(w)\leq x(u)\leq x(v)\leq x(z)$
and $y(u)\leq y(w)\leq y(z)\leq y(v)$.   This only depends on the
relative orders of $u,v,w,z$, and hence a transformation maintaining
relative coordinates also maintains planarity.
\end{proof}

We need a slight strengthening of Theorem~\ref{thm:BBM}.

\begin{lemma}
\label{lem:extendBBM}
Let $G$ be a triangulated graph, let $e\in E$ be an edge on the outer-face,
and assume all separating triangles of $G$ contain $e$.  
Then $G-e$ has a planar RI-drawing. 
Moreover, the drawing is non-aligned and on an $n\times n$-grid, 
the ends of $e$ are at $(1,n)$ and $(n,1)$,
and the other two vertices on the outer-face are at $(2,2)$ and $(n-1,n-1)$.
\end{lemma}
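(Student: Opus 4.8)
The plan is to prove Lemma~\ref{lem:extendBBM} by reducing it to Theorem~\ref{thm:BBM}, using induction on the number of separating triangles of $G$ (equivalently, on the number of vertices, since separating triangles can only disappear under the operation below). If $G$ has no separating triangle, then $G$ is $4$-connected and the claim is exactly Theorem~\ref{thm:BBM}, so we are done. Otherwise, let $T=\{u,v,w\}$ be a separating triangle; by hypothesis $T$ contains $e$, say $e=(u,v)$. Then $T$ partitions $G$ into the subgraph $G_{\text{out}}$ consisting of $T$ together with everything weakly outside it, and the subgraph $G_{\text{in}}$ consisting of $T$ together with everything weakly inside it. Both $G_{\text{in}}$ and $G_{\text{out}}$ are triangulated plane graphs, each has $e$ on its outer face, each has strictly fewer vertices than $G$, and — this is the key point to verify — every separating triangle of $G_{\text{in}}$ or of $G_{\text{out}}$ is also a separating triangle of $G$ and hence still contains $e$. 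So the induction hypothesis applies to each of them: $G_{\text{in}}-e$ and $G_{\text{out}}-e$ each have planar non-aligned RI-drawings of the stated form, on an $n_{\text{in}}\times n_{\text{in}}$- and $n_{\text{out}}\times n_{\text{out}}$-grid respectively, with the ends of $e$ at the top-left and bottom-right corners and the third outer vertex of each at $(2,2)$, and in the case of $G_{\text{out}}$ also $w$ somewhere (it is the fourth boundary vertex $(n_{\text{out}}-1,n_{\text{out}}-1)$, or possibly $w$ plays the role of the $(2,2)$-vertex — one must track which boundary vertex $w$ is).

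Next I would glue the two drawings along the edge/triangle $T$. The idea, illustrated in Figure~\ref{fig:merge_RI}, is that in the drawing of $G_{\text{out}}-e$ the triangle $T$ is drawn as the outer boundary region with $u$ at $(1,n_{\text{out}})$, $v$ at $(n_{\text{out}},1)$, and $w$ at one of the corner-adjacent points; the region strictly inside the drawn triangle $uvw$ is empty, and by the RI-property and non-alignedness it is ``staircase-shaped'' and can accommodate a scaled copy of the drawing of $G_{\text{in}}-e$. Concretely, I would place the $G_{\text{out}}$-drawing first, then insert the $G_{\text{in}}$-drawing into the strip of rows and columns between $w$'s coordinates and the $u,v$ corners, scaling it so its $x$-coordinates and $y$-coordinates become a contiguous block of integers disjoint from those already used, and identifying its copies of $u,v,w$ with the already-placed $u,v,w$ by a transformation that preserves relative coordinates. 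By Observation~\ref{obs:RI_deform}, each piece remains a planar RI-drawing under the relative-coordinate-preserving transformation; and because $G_{\text{in}}$ lies entirely in the bounding box of triangle $uvw$ (which is empty of other $G_{\text{out}}$-vertices in the $G_{\text{out}}$-drawing), no edge of one piece can cross the bounding box of an edge of the other piece, so the combined drawing is planar. One then checks the RI-property across the seam: an edge of $G_{\text{in}}$ and a vertex of $G_{\text{out}}\setminus T$ are separated by the vertical and horizontal lines through $w$, so no such vertex lies in the rectangle of a $G_{\text{in}}$-edge, and symmetrically. Finally, after the gluing, compress all coordinates to $\{1,\dots,n\}$ in both axes (which preserves relative coordinates, hence the RI- and planarity properties, and keeps the drawing non-aligned), and verify that $u,v$ land at $(1,n),(n,1)$ and the other two outer vertices at $(2,2),(n-1,n-1)$ — these come from the outer face of $G_{\text{out}}$, which was already in this form, so only the coordinate-compression bookkeeping needs care.

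The main obstacle I expect is the seam argument: making precise exactly which integer rows and columns the inner drawing occupies, ensuring the inner drawing's copies of $u,v,w$ can be matched to the outer drawing's $u,v,w$ by a single common relative-coordinate-preserving map (this requires that in the $G_{\text{out}}$-drawing the three vertices $u,v,w$ occupy the ``extreme'' relative positions — top-left, bottom-right, and the staircase-corner — consistently with where they sit as the outer boundary of the $G_{\text{in}}$-drawing), and then checking that no RI-rectangle straddles the seam. A secondary subtlety is the choice of which of the two edges of $T$ incident to $e$'s endpoints is $e$ itself and confirming that deleting $e$ from $G$ is consistent with deleting it from both $G_{\text{in}}$ and $G_{\text{out}}$ (the edge $e$ lies on the outer face of each piece, so this is fine, but it should be stated). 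If $G$ has several separating triangles one applies the split at any one of them and recurses; nested separating triangles are handled automatically because each one is inherited by exactly one of the two pieces. I would keep the write-up at the level of ``place $G_{\text{out}}$, carve out the empty staircase under $uvw$, drop in a scaled $G_{\text{in}}$, invoke Observation~\ref{obs:RI_deform}, recompress,'' rather than grinding through explicit coordinate formulas.
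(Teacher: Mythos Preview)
Your proposal is correct and follows essentially the same approach as the paper: induction on the number of separating triangles, splitting at a separating triangle $T$ containing $e$ into an ``inside'' piece $G_1$ and an ``outside'' piece $G_2$, recursing on both, and gluing by inserting a scaled copy of the inner drawing (minus the two ends of $e$) into the empty pocket at the $(2,2)$-corner of the outer drawing, then recompressing coordinates via Observation~\ref{obs:RI_deform}. The paper resolves your ``which boundary vertex is $w$'' concern with a simple reflection and makes the seam argument precise by noting that, after placing the inner drawing in $(1,2]\times(1,2]$, the two pieces occupy disjoint $x$- and $y$-ranges except at the shared vertices, which is exactly the verification you anticipated.
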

\begin{proof}
We proceed by induction on the number of separating triangles of $G$.
In the base case, $G$ is 4-connected and the claim holds by
Theorem~\ref{thm:BBM}.  For the inductive step, assume that $T=\{u,x,w\}$ is
a separating triangle.  By assumption it contains $e$, say $e=(u,w)$.
Let $G_1$ be the graph consisting of $T$
and all vertices inside $T$, and let $G_2$ be the graph obtained
from $G$ by removing all vertices inside $T$.  Apply induction
to both graphs.  In drawing $\Gamma_2$ of $G_2-e$, vertex $x$ is on
the outer-face and hence (after possible reflection) placed at $(2,2)$.  Now insert a (scaled-down)
copy of the drawing $\Gamma_1$ of $G_1$, minus vertices $u$ and $w$, in the
square $(1,2]\times (1,2]$ (see Figure~\ref{fig:merge_RI}).
Since $x$ was (after possible reflection) in the top-right corner of 
$\Gamma_1-\{u,w\}$, the two copies of
$x$ can be identified.  One easily verifies that this gives an RI-drawing,
because within each drawing the relative coordinates are unchanged, 
and the two drawings have disjoint $x$-range and $y$-range
except at $u$ and $w$.  Finally, re-assign coordinates to the vertices 
while keeping relative coordinates intact so that we have an $n\times n$-grid;
by Observation~\ref{obs:RI_deform} this gives a planar RI-drawing.
\end{proof}

\subsection{4-connected planar graphs} \label{sec:4conn}

Combining Theorem~\ref{thm:BBM} with Observation~\ref{obs:RI_deform}, 
we immediately obtain:

\begin{theorem}
Let $G$ be a triangulated 4-connected planar graph.
Then $G$ has a planar rook-drawing with at most one bend.
\end{theorem}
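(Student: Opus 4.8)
The plan is to delete one outer-face edge, draw the rest by Theorem~\ref{thm:BBM}, and then re-insert the deleted edge with a single bend that runs along two sides of the grid. Let $e=(u,w)$ be an edge of the outer-face of $G$. Since $G$ is triangulated and $4$-connected, Theorem~\ref{thm:BBM} applies to $e$ and produces a planar RI-drawing $\Gamma$ of $G-e$ that is non-aligned, lies on the $n\times n$-grid, and has $u$ at $(1,n)$ and $w$ at $(n,1)$, with the remaining two outer-face vertices of $G-e$ at $(2,2)$ and $(n-1,n-1)$. Forgetting about $e$, the drawing $\Gamma$ is already a straight-line rook-drawing of $G-e$; only the edge $e$ must be put back.

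First I would note that, since $\Gamma$ is non-aligned on the $n\times n$-grid, the $x$-coordinates of the vertices form a permutation of $\{1,\dots,n\}$ with $x(u)=1$ and $x(w)=n$, and likewise the $y$-coordinates form a permutation of $\{1,\dots,n\}$ with $y(w)=1$ and $y(u)=n$. Hence every vertex other than $u$ lies in a column in $\{2,\dots,n-1\}$, every vertex other than $w$ lies in a row in $\{2,\dots,n-1\}$, and each edge, being a segment between two such vertices, stays inside the strip $\{x\ge 2\}$ unless it is incident to $u$ (in which case it meets the line $\{x=1\}$ only at $u$), and symmetrically for the strip $\{y\ge 2\}$ and the vertex $w$. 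Therefore the polyline $\pi$ that descends column $1$ from $u=(1,n)$ to the corner $(1,1)$ and then runs along row $1$ to $w=(n,1)$ is disjoint from $\Gamma$ except at its endpoints $u$ and $w$, and its single bend-point $(1,1)$ is occupied by no vertex and crossed by no edge.

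I would then draw $e$ along $\pi$. This introduces exactly one bend, does not move any vertex (so the drawing remains non-aligned on the $n\times n$-grid), and creates no crossing, yielding a planar rook-drawing of $G$ with one bend. (Observation~\ref{obs:RI_deform} is not really needed for this particular argument; it is there to guarantee, more generally, that an RI-drawing can be normalized onto the $n\times n$-grid while staying planar and RI, which is convenient if Theorem~\ref{thm:BBM} is applied in a less normalized form.)

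The step that needs the most care is the empty-corridor claim of the second paragraph: one must be sure that no edge of $G-e$ — in particular none of the edges incident to $u$ or $w$, and none of the outer-face edges of $G-e$ — enters column $1$ or row $1$ away from $u$, respectively $w$. This is exactly what non-alignment of the BBM drawing buys us, and it is what lets the bend sit at the grid corner $(1,1)$ rather than outside the grid. One also checks the trivial point that $n\ge 4$ (in fact $n\ge 6$, the octahedron being the smallest triangulated $4$-connected planar graph), so that the listed coordinates $(1,n),(n,1),(2,2),(n-1,n-1),(1,1)$ are pairwise distinct.
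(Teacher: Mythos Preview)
Your proof is correct and follows essentially the same approach as the paper: apply Theorem~\ref{thm:BBM} to $G-e$, then route $e$ with one bend at the bottom-left corner $(1,1)$ along the otherwise-empty leftmost column and bottommost row. Your write-up simply spells out in more detail why that L-shaped corridor is free of vertices and edge-interiors, which the paper asserts in one sentence.
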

\begin{proof}
Fix an arbitrary edge $e$ on the outer-face, and apply Theorem~\ref{thm:BBM}
to obtain an RI-rook-drawing $\Gamma$ of $G-e$.
It remains to add in edge $e=(u,v)$.
One end $u$ of $e$ is in the top-left corner, and the
leftmost column contains no other vertex.  The other end $v$ is
in the bottom-right corner, and the bottommost row contains no
other vertex.  We can hence route $(u,v)$ by going vertically
from $u$ and horizontally from $v$, with the bend in the bottom-left corner.
\end{proof}

\begin{corollary}
Let $G$ be a 4-connected planar graph.  Then $G$ has a rook-drawing
with at most one bend, and with no bend if $G$ is not triangulated.
\end{corollary}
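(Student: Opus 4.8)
The plan is to reduce the general 4-connected case to the triangulated 4-connected case already handled by the preceding theorem. Given a 4-connected planar graph $G$ that is not already triangulated, I would first fix a planar embedding and add edges to triangulate every non-triangular face (including the outer one) to obtain a triangulated supergraph $G^+$. The key point to verify is that $G^+$ remains 4-connected, i.e., that triangulating a 4-connected planar graph can be done without creating a separating triangle. This is a standard fact: if $G$ is 4-connected and we add a diagonal $(x,y)$ inside a face $f$, then any new triangle through $(x,y)$ would consist of $x$, $y$, and a common neighbour $z$; but $\{x,y,z\}$ separating would require vertices both inside and outside the triangle in $G^+$, and one checks (using 4-connectivity of $G$ and the fact that $x,y$ were on a common face of $G$) that no such separation arises — one processes the faces one at a time, each time using that the current graph is still 4-connected to pick a safe diagonal, so an inductive argument over the added edges works.

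Once $G^+$ is triangulated and 4-connected, I would apply the preceding theorem to get a planar rook-drawing $\Gamma^+$ of $G^+$ with at most one bend. Since $G$ is a subgraph of $G^+$ on the same vertex set, simply deleting the edges of $E(G^+)\setminus E(G)$ from $\Gamma^+$ yields a planar drawing of $G$ on the same vertex positions; deleting edges cannot introduce crossings, cannot make two vertices share a row or column, and cannot increase the number of bends. Hence $G$ gets a planar rook-drawing with at most one bend, giving the first half of the corollary.

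For the sharper ``no bend if $G$ is not triangulated'' claim, I would examine where the single bend came from in the theorem: it is exactly the re-added outer-face edge $e$, routed through the empty bottom-left corner. When $G$ is not triangulated, at least one face $f$ of $G$ was a non-triangle, so there is at least one edge of $G^+$ not in $G$; I would choose the triangulation so that this ``free'' edge is incident to the outer face — concretely, pick $f$ and a diagonal of it, and if $f$ is an interior face, first relocate via the embedding so we may treat $f$ as the outer face (the outer face is a non-triangle precisely when the outer face of $G$ was, or we can just pick any non-triangular face and use that the theorem lets us choose which outer-face edge $e$ to delete). Then set $e$ to be precisely one of the added diagonals bounding the (new) outer triangle. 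Applying the theorem to $G^+$ and this choice of $e$, the drawing $\Gamma^+ - e$ is already straight-line and non-aligned on an $n\times n$-grid; restricting to $E(G) \subseteq E(G^+) \setminus \{e\}$ by deleting the remaining added edges keeps it straight-line, planar, and non-aligned. So $G$ has a bend-free rook-drawing.

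The main obstacle I expect is the 4-connectivity-preserving triangulation lemma: one has to argue carefully that diagonals can be added one by one without ever creating a separating triangle, which requires knowing that in a 4-connected planar graph any face of length $\geq 4$ admits a chord whose addition keeps the graph 4-connected (equivalently, creates no separating triangle). A clean way to see this is: pick two vertices $x,y$ on the face $f$ at distance exactly $2$ along $f$ with common face-neighbour $p$; then $\{x,y,p\}$ is the only potential new triangle, and it is non-separating because the region it bounds (a single face of $G$ plus the added edge) contains no vertices in its interior on one side, so it is a filled-free, hence non-separating, triangle — and more careful case analysis handles the situation when every such short-distance pair is already adjacent, which forces $f$ to have a chord already and contradicts $f$ being a face. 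Assembling this into a crisp inductive statement is the only delicate part; everything else is edge-deletion bookkeeping.
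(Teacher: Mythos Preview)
Your proposal is correct and follows essentially the same route as the paper. The paper's proof is more concise only because it cites the 4-connectivity-preserving triangulation as a known result \cite{BKK97} rather than sketching it, and it goes straight to Theorem~\ref{thm:BBM} (choosing an added edge $e$ and a face containing it as the outer face) instead of first invoking the one-bend theorem and then tracing where the bend lives; but the underlying idea---triangulate while staying 4-connected, put an added edge on the outer face, draw $G'-e$ straight-line via RI, delete added edges---is identical.

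One remark on your sketch of the triangulation lemma: the claim that ``$\{x,y,p\}$ is the only potential new triangle'' is not quite right, since $x$ and $y$ may have a common neighbour $z$ elsewhere in the graph, and adding $(x,y)$ would then create the triangle $\{x,y,z\}$ as well, which could be separating. The actual argument (as in \cite{BKK97}) needs to handle this; since you flagged this step as the delicate one anyway, citing the known result would be the cleanest fix.
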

\begin{proof}
If $G$ is triangulated then the result was shown above, so assume
$G$ has at least one face of degree 4 or more.
Since $G$ is 4-connected,
one can add edges
to $G$ such that the result $G'$ is triangulated and 4-connected
\cite{BKK97}.  Pick a face incident to an added edge $e$ as outer-face of $G'$, 
and apply Theorem~\ref{thm:BBM} to obtain an
RI-drawing of $G'-e$.  
Deleting all edges in $G'-G$ gives the result.
\end{proof}

Since we have only one bend, and the ends of the edge $(u,v)$ that contain
it are the top-left and bottom-right corner, we can remove the bend by
stretching. 

\begin{theorem}
Every 4-connected planar graph has 
a non-aligned planar drawing in an $n\times (n^2-3n+4)$-grid and in a $(2n-2) \times (2n-2)$-grid.
\end{theorem}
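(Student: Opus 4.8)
The plan is to start from the one-bend rook-drawing of the preceding theorem and straighten its single bend by stretching. Recall that that drawing fixes an edge $e=(u,v)$ on the outer face, takes the non-aligned planar RI-drawing $\Gamma$ of $G-e$ on the $n\times n$-grid from Theorem~\ref{thm:BBM} (so $u$ is at $(1,n)$, $v$ at $(n,1)$, and since $\Gamma$ is non-aligned every other vertex has both coordinates in $\{2,\dots,n-1\}$), and routes $e$ as an $L$-shape around the lower-left corner. I will modify $\Gamma$ only by moving $u$ and/or $v$. As long as the move keeps $u$ the unique vertex with smallest $x$ and largest $y$, and $v$ the unique vertex with largest $x$ and smallest $y$, the relative coordinates are unchanged, so by Observation~\ref{obs:RI_deform} the result is still a planar RI-drawing of $G-e$; it then remains to draw $e$ straight and check that no crossing appears. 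The key simplification is that it suffices to arrange that every vertex other than $u,v$ lies strictly below the line $\ell$ through $u$ and $v$: then the open segment $uv\subseteq \ell$ meets no vertex, no edge of $G-e$ joining two such vertices (a segment between two points below $\ell$ stays below $\ell$), and no edge incident to $u$ or $v$ (such an edge meets $\ell$ only at that endpoint); combined with planarity of the RI-drawing of $G-e$ this gives a planar drawing, and distinctness of coordinates is preserved by the moves, so it is non-aligned.

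For the $(2n-2)\times(2n-2)$-grid, move $u$ to $(1,2n-2)$ and $v$ to $(2n-2,1)$ and leave all other vertices in place. Then $\ell$ is the line $\{x+y=2n-1\}$, while every other vertex $w$ satisfies $x(w)+y(w)\le (n-1)+(n-1)=2n-2$ and hence lies strictly below $\ell$. Drawing $e$ straight therefore yields a non-aligned planar straight-line drawing using only columns and rows in $\{1,\dots,n-1,2n-2\}$, i.e.\ inside a $(2n-2)\times(2n-2)$-grid.

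For the $n\times (n^2-3n+4)$-grid the width must stay $n$, so the $x$-coordinates remain the set $\{1,\dots,n\}$; accordingly I move only $u$, up to $(1,H)$ with $H:=n^2-3n+4$. The line $\ell$ through $(1,H)$ and $(n,1)$ has height $\frac{(n-j)H+(j-1)}{n-1}$ at $x=j$. For the (unique) vertex in column $j=n-1$ this equals $\frac{H+n-2}{n-1}$, which exceeds $n-1$ precisely because $H>n^2-3n+3$; and Theorem~\ref{thm:BBM} tells us this vertex is the outer vertex at $(n-1,n-1)$, whose $y$-coordinate $n-1$ is exactly what must be cleared, so the value $H=n^2-3n+4$ is forced. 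For a vertex in any column $j\le n-2$ the height of $\ell$ is at least $\frac{2H}{n-1}$, which exceeds $n-1$ since $2H>(n-1)^2$, and this comfortably clears the vertex's $y$-coordinate (which is at most $n-1$). Hence all vertices other than $u,v$ lie strictly below $\ell$, and drawing $e$ straight gives a non-aligned planar drawing on the $n\times(n^2-3n+4)$-grid. All of this assumes $G$ is triangulated, so that Theorem~\ref{thm:BBM} applies (a $4$-connected triangulated graph has no separating triangle); if $G$ is not triangulated, the preceding corollary already provides a non-aligned straight-line drawing on the $n\times n$-grid, which fits inside both target grids.

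The main obstacle is not the arithmetic but guaranteeing that the straightened edge $e$ crosses nothing else in the drawing; the reduction to the single condition ``every other vertex strictly below $\ell$'' is what makes both bounds drop out cleanly, and it relies on the RI-property of $\Gamma$ surviving the relocation of $u$ and $v$ through Observation~\ref{obs:RI_deform}. The only other point requiring care is the tightness computation for the first bound: the position $(n-1,n-1)$ guaranteed by Theorem~\ref{thm:BBM} is precisely what pins the minimal achievable height of this construction at $n^2-3n+4$.
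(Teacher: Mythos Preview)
Your proof is correct and follows essentially the same approach as the paper: start from the RI rook-drawing of $G-e$ given by Theorem~\ref{thm:BBM}, relocate $u$ (and, for the square bound, $v$) while preserving relative coordinates so that Observation~\ref{obs:RI_deform} keeps the RI-drawing planar, and then check that the straight segment $uv$ lies above all remaining vertices. Your write-up is in fact a bit more careful than the paper's in two respects: you spell out why ``all other vertices strictly below $\ell$'' suffices to rule out crossings with edges of $G-e$, and you explicitly dispose of the non-triangulated case via the preceding corollary.
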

\begin{proof}
Let $\Gamma$ be the RI-drawing of $G-(u,v)$ with $u$ at $(1,n)$ and $v$ at
$(n,1)$.  Relocate $u$ to point $(1,n^2-3n+4)$.
  The resulting drawing is still
a planar RI-drawing by Observation~\ref{obs:RI_deform}.  Now 
$y(u)-y(v)=(n-2)(n-1)+1$, hence the line segment
from $u$ to $v$ has slope less than $-(n-2)$, and is therefore above 
point $(n-1,n-1)$ (and with that, also above all other vertices of
the drawing).
So we can add this edge without violating planarity,
and obtain a 
non-aligned straight-line drawing of $G$
(see Figure~\ref{fig:4connectedSemiRook}).

For the other result, start with the same drawing $\Gamma$.  Relocate
$u$ to $(1,2n-2)$ and $v$ to $(2n-2,1)$.  The line segment from $u$ to
$v$ has slope $-1$ and crosses $\Gamma$ only between points $(n-1,n)$
and $(n-1,n)$, where no points of $\Gamma$ are located.
So we obtain a non-aligned 
planar straight-line drawing (see Figure~\ref{fig:4connected2nx2n}).
\end{proof}

\begin{figure}[t]
\hspace*{\fill}
\begin{subfigure}[b]{0.35\linewidth}
\includegraphics[scale=0.4]{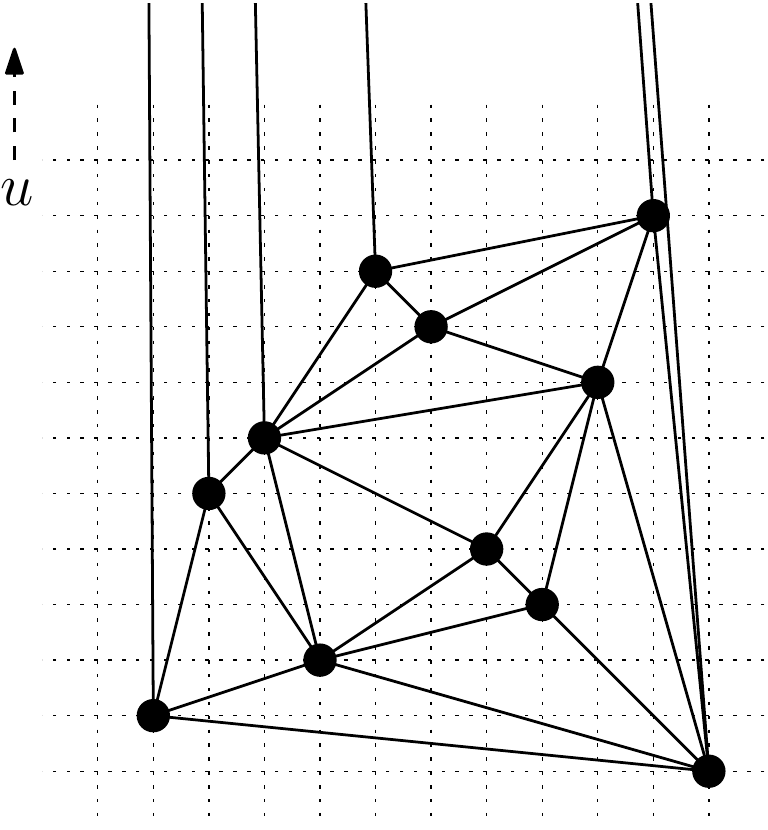}
\caption{A non-aligned drawing of width $n$.}
\label{fig:4connectedSemiRook}
\end{subfigure}
\hspace*{\fill}
\begin{subfigure}[b]{0.35\linewidth}
\includegraphics[scale=0.4]{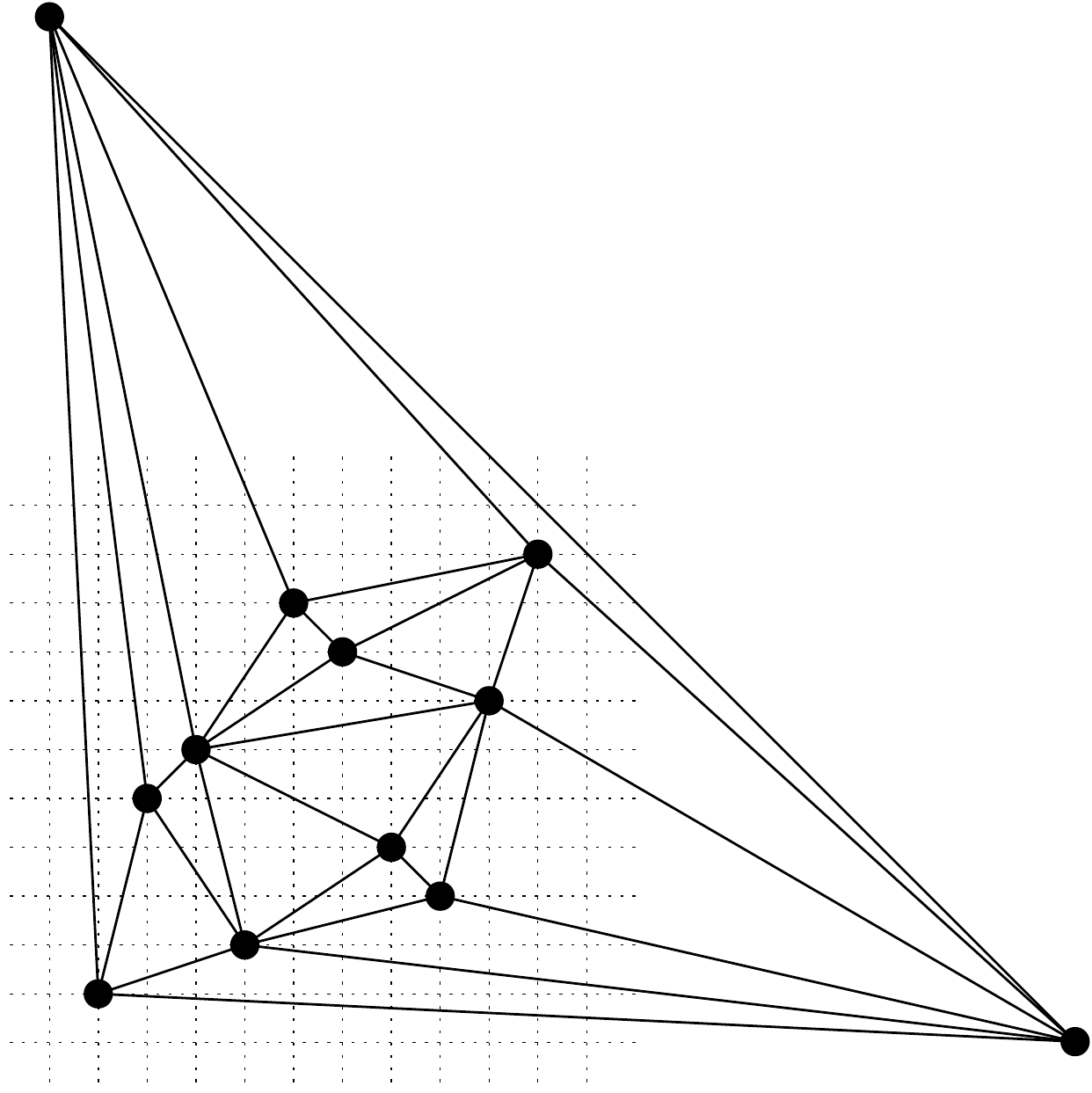}
\caption{A non-aligned drawing on a $(2n-2)\times(2n-2)$-grid.}
\label{fig:4connected2nx2n}
\end{subfigure}
\hspace*{\fill}
\caption{
Transforming Figure~\ref{fig:RI_example} into straight-line drawings.}
\end{figure}

\subsection{Constructing rook-drawings with few bends} 

We now explain the construction of a (poly-line) rook-drawing
for a triangulated graph $G$ with at least 5 vertices.
We proceed as follows: 

\begin{enumerate}
\item Find a small independent-filled-hitting set $E_f$.

Here, an {\em independent-filled-hitting set} 
is a set of edges $E'$
such that (i) every filled triangle has
at least one edge in $E'$ (we say that $E'$ {\em hits} all 
filled triangles), and (ii) every face of $G$ has at most one
edge in $E'$ (we say that $E'$ is {\em independent}).
We can show the following bound on $|E'|$:

\begin{lemma}
\label{lem:independent-filled-hitting}
Any triangulated graph
$G$ of order $n$ has an independent-filled-hitting set of size at most 
\begin{itemize}
\item $f_G$ (where $f_G$ is the number of
filled triangles of $G$), and it can be found in $O(n)$ time,
\item 
$\frac{2n-5}{3}$, and it can be found in $O((n\log n)^{1.5}\sqrt{\alpha(n,n)})$ time
or approximated arbitrarily close in $O(n)$ time.  Here $\alpha$ is the
slow-growing inverse Ackermann function.
\end{itemize}
\end{lemma}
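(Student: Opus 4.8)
The plan is to handle the two bounds separately, but both rely on the same structural object: the \emph{dual-like tree of filled triangles}. First I would set up the containment poset of filled triangles. Every filled triangle $T$ encloses a region; for two filled triangles, their enclosed regions are either nested or interior-disjoint (they cannot properly cross, since $G$ is planar and a triangle is a simple closed curve). Hence the filled triangles form a forest $\calT$ under the ``innermost enclosing'' relation, rooted at the outer-face triangle. For a filled triangle $T$, let its \emph{children} be the maximal filled triangles strictly inside $T$, and define the \emph{region} $R(T)$ to be the part of the plane inside $T$ but outside all children of $T$; this region is bounded by the edges of $T$ together with the edges of the children triangles, and it contains at least one vertex of $G$ strictly in its interior (because $T$ is a filled triangle and none of that vertex's "filled-ness'' is absorbed by a child).

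\textbf{First bound ($|E'|\le f_G$).} The idea is to pick one edge per filled triangle, but to coordinate the choices so that no face gets two picked edges. I would process $\calT$ top-down. For each filled triangle $T$, I want to assign to $T$ an edge $e(T)$ that (i) lies on the boundary of $T$ and (ii) lies in the region $R(T)$ (equivalently, is not "used up'' by a face on the outside of $T$). The only faces that could receive two picked edges are triangular faces of $G$ incident to two edges of distinct picked triangles; but a triangular face $f$ of $G$ has its three edges, and if two of them are boundary edges of picked triangles $T_1\ne T_2$, then $f$ must be a face of both $R(T_1)$ and $R(T_2)$, forcing $T_1,T_2$ to be in a parent–child relation with $f$ being a face of $R(T_{\text{parent}})$ adjacent to the child. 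So the conflict is local: when I pick $e(T)$ for a child $T$, I must avoid the (at most one) edge of $T$ that was already picked by $T$'s parent as $e(T_{\text{parent}})$. Since $T$ has three edges and at most one is forbidden, a valid choice always exists. Implementing this bottom-up or top-down over $\calT$ takes $O(n)$ time once $\calT$ is computed, and $\calT$ itself is computable in $O(n)$ time (identifying all separating triangles and the outer face, then sorting by containment). I expect the \textbf{main obstacle} here to be nailing down exactly why a face can receive at most one forced edge — i.e., the claim that the only possible conflict between two chosen triangle-edges is through the parent–child relation in $\calT$; this needs a careful planarity argument about where a common face of two triangles can sit.

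\textbf{Second bound ($|E'|\le\frac{2n-5}{3}$).} Here I would not insist on one edge per filled triangle but instead directly find a small independent edge set hitting all filled triangles. Build an auxiliary hypergraph (or rather set-cover instance) whose ground set is $E(G)$ and whose sets to be hit are the filled triangles; the independence constraint says we must pick a subset of edges containing at most one edge per face of $G$. Since $G$ is triangulated with $n$ vertices it has $2n-4$ faces and $3n-6$ edges. A counting/LP-rounding or a direct greedy argument on $\calT$ gives the bound: each region $R(T)$ can be ``charged'' to its interior vertices, and a careful amortization shows the number of triangles we must hit is at most $\frac{2n-5}{3}$ worth of independent edges. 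For the running time, the independence constraint makes this a matching-type problem — selecting one edge per face is like choosing a system of representatives — so the $O((n\log n)^{1.5}\sqrt{\alpha(n,n)})$ bound points to a reduction to planar (or near-linear) matching/flow, e.g. via the planar-graph matching algorithms; the $O(n)$-time arbitrarily-close approximation would come from the standard primal–dual or local-ratio $2$-approximation for hitting set in this bounded-frequency setting, tuned so the constant in front is controllable. The subtle point — and the real \textbf{obstacle} for the second bullet — is proving the clean fraction $\frac{2n-5}{3}$ rather than merely $O(n)$: this requires matching the charging scheme tightly against the extremal configuration (a deeply nested stack of triangles), and verifying that the independence constraint does not force the bound up.
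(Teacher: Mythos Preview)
Your first-bound argument has a real gap. You assert that if a face $f$ receives edges from two distinct filled triangles $T_1\neq T_2$, then $T_1$ and $T_2$ must be in a parent--child relation in $\calT$. This is false: siblings can conflict. Take two interior-disjoint filled triangles $T_1=\{a,b,c\}$ and $T_2=\{b,d,e\}$ that are both children of the same parent $P$, and suppose $\{a,b,d\}$ is a face of $G$ lying in the region $R(P)$. Then edge $(a,b)$ lies on $T_1$ and edge $(b,d)$ lies on $T_2$, so your greedy could pick both and violate independence on that face. Worse, a single face in $R(P)$ can have all three of its edges on three different children of $P$, so ``avoid the one forbidden edge'' breaks down entirely. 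The paper resolves exactly this issue: at each level it takes the skeleton $\skel{G}$ (the graph with the interiors of the maximal separating triangles removed), whose dual is $3$-regular and $3$-connected, and uses Petersen's theorem to find a perfect matching in that dual. This matching picks \emph{exactly one} edge per face of $\skel{G}$, hence exactly one edge per child triangle, with independence guaranteed automatically; the chosen edge on each child is then prescribed as the outer-face edge for the recursive call. Your tree $\calT$ is the right scaffolding, but the per-level selection must be a matching computation, not a greedy avoid-the-parent rule.

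Your second-bound sketch does not contain the two ideas that actually yield $\tfrac{2n-5}{3}$. The paper does not use a charging or LP-rounding argument on $\calT$; it observes that a $4$-colouring of $G$ induces three dual perfect matchings $M_1,M_2,M_3$, each of which contains exactly one edge of every triangle and hence is an independent-filled-hitting set. Restricting each $M_i$ to edges that lie in some filled triangle and then proving (by assigning to each such edge its ``inside'' face) that at most $2n-5$ edges of $G$ lie in any filled triangle gives $|E_1|+|E_2|+|E_3|\le 2n-5$, so the smallest of the three has size at most $\tfrac{2n-5}{3}$. The stated running time comes from computing this minimum directly as a minimum-weight perfect matching in the dual (weights $0/1$ according to whether the primal edge is in a filled triangle) via Gabow--Tarjan, and the linear-time approximation is Baker's PTAS for planar matching --- not a generic hitting-set local-ratio scheme. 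Your plan gestures at matching, but without the $4$-colouring/three-matchings observation and the $2n-5$ edge count, there is no route to the exact constant.
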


The proof of this lemma requires detours into matchings and 4-coloring;
to keep the flow of the algorithm-explanation we therefore defer it
to the appendix (Section~\ref{sec:independentFilled}).

\item Since the outer-face is a filled triangle, there exists one edge $e_o\in E_f$ 
	that belongs to the outer-face.  Define $E_s:=E_f-\{e_o\}$ and notice
	that $E_s$ contains no outer-face edges since $E_f$ is independent.
\item As done in some previous papers \cite{KW02,CHK+15}, 
	remove separating triangles
	by subdividing all edges $e\in E_s$, and re-triangulate 
	by adding edges from the subdivision vertex
	(see Figure~\ref{fig:subdivide}).
	Let $V_x$ be the new set of vertices,
	and let $G_1$ be the new graph.  Observe that $G_1$ may still have
	separating triangles, but all those separating triangles contain $e_o$
	since $E_f$ hits all filled triangles.

\begin{figure}
\begin{center}
\hspace*{\fill}
\includegraphics[scale=0.75,page=1]{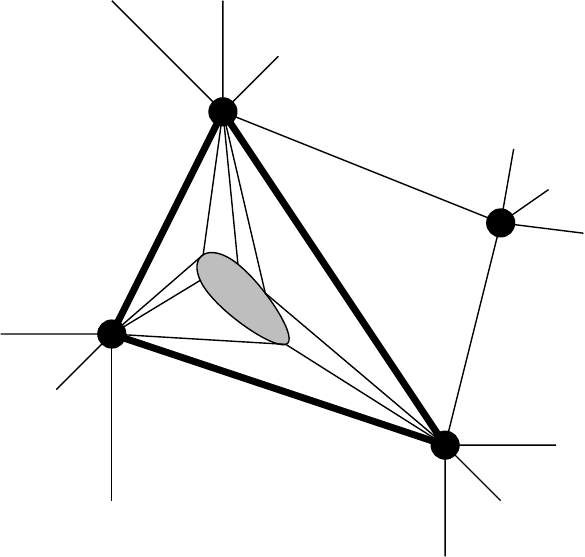}
\hspace*{\fill}
\includegraphics[scale=0.75,page=2]{subdivide.pdf}
\hspace*{\fill}
\caption{
A separating triangle removed by subdividing one of its edges and re-triangulating.
}
\label{fig:subdivide}
\end{center}
\end{figure}

\item	By Lemma~\ref{lem:extendBBM},  $G_1-e_o$
	has a non-aligned RI-drawing $\Gamma$ 
	where the ends of $e_o$ are at the top-left and
	bottom-right corner.
\item Transform $\Gamma$ into drawing $\Gamma'$ so that the relative
	orders stay intact, the original vertices (i.e.,
	vertices of $G$) are
	on an $n\times n$-grid and the subdivision vertices (i.e., vertices
	in $V_x$) are in-between.  

This can be done by enumerating the vertices in $x$-order, and assigning
new $x$-coordinates in this order, increasing to the next integer for
each original vertex and increasing by $\frac{1}{|V_x|+1}$ for each subdivision vertex.
Similarly update the $y$-coordinates (see Figure~\ref{fig:RI_transform}).
	Drawing $\Gamma'$  is still a non-aligned RI-drawing,
	and the ends of $e_o$ are on the top-left and bottom-right 
	corner.
	
\item Let $e$ be an edge in $E_s$ with subdivision vertex $x_e$.
	Since $e$ is an interior edge of $G$, $x_e$ is an interior vertex 
	of $G_1$.  Now move $x_e$ to some integer grid-point nearby.
	This is possible due to the following.

	\begin{lemma}
	\label{lem:move_RI}
	Let $\Gamma$ be a planar RI-drawing. 
	Let $x$ be an interior vertex of degree 4 with neighbours 
	$u_1,u_2,u_3,u_4$ that form a 4-cycle.
	Assume that none of $x,u_1,u_2,u_3,u_4$ share a grid-line.
	Then we can move $x$ to a point on grid-lines of its neighbours
	and obtain a planar RI-drawing.
	\end{lemma}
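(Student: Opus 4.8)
I want to show that an interior degree-4 vertex $x$, whose four neighbours $u_1,u_2,u_3,u_4$ form a 4-cycle (in this cyclic order around $x$), can be relocated to a grid point lying on a grid-line of one of its neighbours without destroying the RI-property or planarity. The key observation is that $x$ lives inside the quadrilateral $Q=u_1u_2u_3u_4$, and the only edges incident to $x$ are $(x,u_1),\dots,(x,u_4)$; all other edges of the drawing are either entirely outside $Q$ or are a diagonal of $Q$ (but $Q$ is a 4-cycle bounding faces $xu_iu_{i+1}$, so no chord of $Q$ exists in $G$). Hence moving $x$ to \emph{any} point in the open interior of $Q$ that maintains the cyclic visibility of $u_1,u_2,u_3,u_4$ keeps the drawing planar, and I only need to check the rectangle-of-influence condition for the four edges at $x$ against the other vertices of the drawing.

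First I would set up coordinates: by the non-degeneracy hypothesis, sort the five points $x,u_1,u_2,u_3,u_4$ by $x$-coordinate and by $y$-coordinate, so $x$ has some rank in each order; let $a<x(x)<b$ be the $x$-coordinates of the neighbours immediately to the left and right of $x$ (if they exist), and similarly $c<y(x)<d$ in $y$. Because $\Gamma$ is an RI-drawing, for each edge $(x,u_i)$ the axis-aligned rectangle $R(x,u_i)$ is empty, and I claim this forces: no vertex of the drawing other than the $u_i$'s has $x$-coordinate in the open interval $(a,b)$ \emph{and} $y$-coordinate in $(c,d)$ simultaneously in a way that would block $x$. More carefully, I would argue that the open box $B=(a,b)\times(c,d)$ — or rather the relevant sub-boxes cut out by the neighbours — contains no vertices of $\Gamma$ other than $x$, using emptiness of the four $R(x,u_i)$ and the fact that the $u_i$ alternate around $x$ so their rectangles cover a neighbourhood of $x$. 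Then I would move $x$ to a grid point of the form $(x(u_j), y(u_k))$ for a suitably chosen pair, or more simply to the nearest integer point inside $B$ lying on some grid-line of a neighbour; since the relative order of $x$ with respect to every other vertex is unchanged, Observation~\ref{obs:RI_deform} (applied to the subdrawing with $x$ removed, then re-inserted) guarantees the RI-property and planarity are preserved for all edges not incident to $x$, and the four edges at $x$ are checked directly — their rectangles against any vertex $z$ are empty because $z\notin B$ and hence $z$ lies outside $R(x,u_i)$ for the relevant $i$.

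The one subtlety I need to handle carefully is that after the move, $x$ now \emph{shares} a grid-line with one neighbour, so strictly speaking the relative-order hypothesis of Observation~\ref{obs:RI_deform} is violated at that coordinate. I would deal with this by a limiting / perturbation argument: move $x$ to an interior point of $B$ that is \emph{strictly} between grid-lines but arbitrarily close to the target integer point, apply Observation~\ref{obs:RI_deform} there to get a valid non-aligned RI-drawing, and then observe that the RI-property (as an open condition on the four boxes at $x$, all of which stay empty by continuity as long as no vertex of $\Gamma$ lies in the closed box $[a,b]\times[c,d]$ except the $u_i$ on its boundary) and planarity survive the final snap of $x$ onto the grid-line; the only thing lost is that $x$ no longer has a distinct grid-line, which is exactly what the lemma allows. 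Alternatively — and this is probably the cleaner writeup — I can directly verify all the order conditions in Observation~\ref{obs:RI_deform}'s crossing criterion using non-strict inequalities and note that the criterion there ($x(w)\le x(u)\le x(v)\le x(z)$ etc.) is unaffected by one shared coordinate at $x$, since a crossing involving edge $(x,u_i)$ would require $x$ strictly between the other two in one coordinate.

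\textbf{Main obstacle.} The main thing to get right is establishing that the box around $x$ cut out by its four neighbours is genuinely empty of all other vertices — i.e. that the four empty rectangles $R(x,u_i)$ really do cover a full axis-aligned neighbourhood of $x$ because $u_1,\dots,u_4$ alternate in the cyclic order around $x$ (two of them to the "upper-left/lower-right" and two to the "upper-right/lower-left", loosely). If the four neighbours happened to all lie, say, to the right of $x$, the union of the four rectangles would not surround $x$ and the argument would fail; so I must use the 4-cycle structure (equivalently, that $x$ together with consecutive neighbours bounds a triangular face) to pin down that the $u_i$ occupy "all four quadrants" around $x$ in the appropriate combinatorial sense, and then conclude emptiness of the box. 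Everything after that is routine case-checking on which grid-line to snap to.
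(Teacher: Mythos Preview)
Your high-level plan matches the paper's: show that the four neighbours occupy the four open quadrants around $x$ (the paper does this with the same two-case argument you sketch), normalise the five rows and five columns to $\{1,\dots,5\}$ so that $x$ sits at $(3,3)$, pick a specific target on grid-lines of neighbours (the paper takes $x'=(2,2)$, after arguing that this point is vertex-free), and then verify the four new RI rectangles directly.

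The genuine gap is exactly in the step you call ``routine case-checking''. Your verification reads: ``$z\notin B$ and hence $z$ lies outside $R(x',u_i)$''. This is false, because $R(x',u_i)$ is in general much larger than your box $B=(a,b)\times(c,d)$. With $x'=(2,2)$ and $u_1$ in the first quadrant --- say at $(5,5)$ --- one has $R(x',u_1)=(2,5)\times(2,5)$, which properly contains $B=(2,4)\times(2,4)$; emptiness of $B$ says nothing about the region $R(x',u_1)\setminus B$. For the same reason your relative-order appeal to Observation~\ref{obs:RI_deform} breaks: a vertex $z$ with $x$-coordinate strictly between columns $2$ and $3$ but $y$-coordinate outside $(2,4)$ is not in $B$, yet sliding $x$ from column $3$ to column $2$ swaps their relative $x$-order.

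What the paper actually does --- and what your sketch never invokes --- is use the RI rectangles of the \emph{cycle edges} $(u_i,u_{i+1})$. For each new rectangle it exhibits an explicit cover by old empty regions: $R(x',u_3)\subset R(x,u_3)$; $R(x',u_4)\subset R(u_3,u_4)$ and $R(x',u_2)\subset R(u_2,u_3)$ (both cycle-edge rectangles); and $R(x',u_1)$ is covered by the union of the central box $R\big((2,2),(4,4)\big)$ (empty because $x$ was the only vertex strictly inside the $4$-cycle), $R(u_1,u_2)$, $R(u_1,u_4)$, and $R(x,u_1)$. Without the cycle-edge rectangles there is no way to certify emptiness of the enlarged boxes at $x'$, so this is the missing ingredient, not a routine detail.
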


	The proof of this lemma is not hard, but requires careful checking
	of all positions of neighbours of $x$; we defer it to the 
	appendix (Section~\ref{sec:move_RI}).

	Note that the neighbours of $x_e$ are not in $V_x$, since $E_s$ is independent.
	So we can apply this operation independently to all subdivision-vertices.
\item
	Now replace each subdivision-vertex
	$x_e$ by a bend, 
	connected to the ends of $e$ along the corresponding edges from $x_e$
	(see Figure~\ref{fig:RI_shift}).
	(Sometimes, as is the case in the example, we could also simply
	delete the bend and draw edge $e$ straight-line.)
	None of the shifting changed positions for vertices of $G$, so we
	now have a rook-drawing of $G-e_o$ with bends.  The above shifting
	of vertices does not affect outer-face vertices, so the ends of $e_o$ 
	are still in the top-left and bottom-right corner.  As the final
	step draw $e_o$
	by drawing vertically from one end and horizontally from the other;
	these segments are not occupied by the rook-drawing.

\end{enumerate}

\begin{figure}
\begin{center}
\hspace*{\fill}
\begin{subfigure}[b]{0.45\linewidth}
\includegraphics[width=\linewidth,page=1,trim=0 0 190 0,clip]{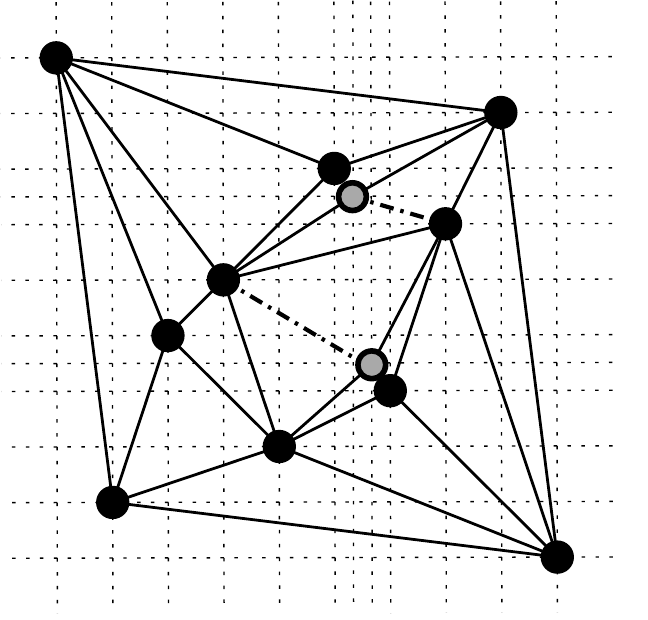}
\caption{Reorder such that subdivision vertices (grey) are not at grid points.}
\label{fig:RI_transform}
\end{subfigure}
\hspace*{\fill}
\begin{subfigure}[b]{0.45\linewidth}
\includegraphics[width=\linewidth,page=1,trim=190 0 0 0,clip]{placing_vertices.pdf}
\caption{Subdivision vertices shifted to integer grid-points, and adding $e_0$.}
\label{fig:RI_shift}
\end{subfigure}
\hspace*{\fill}
\caption{Creating non-aligned drawings with few bends.}
\end{center}
\end{figure}

We added one bend for each edge in $E_f$.
By Lemma~\ref{lem:independent-filled-hitting}, we can an $E_f$
with $|E_f|\leq f_G$ and $|E_f|\leq \frac{2n-5}{3}$ (neither bound
is necessarily smaller than the other), and hence have:

\begin{theorem}
Any planar graph $G$ of order $n$ has a planar rook-drawing with at most $b$ bends, with
$b\leq \min\{ \frac{2n-5}{3}, f_G\}$.
\end{theorem}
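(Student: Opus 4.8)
The plan is to run the construction described above (steps~1--7), verifying at each step that the hypotheses of the lemmas it invokes are satisfied, and then to count the bends. Two quick reductions come first. For $n\le 4$ the claim is immediate, so assume $n\ge 5$. If $G$ is not triangulated, add edges to obtain a triangulated plane graph $G'$ on the same vertex set; deleting an edge from a drawing only removes bends, so it suffices to produce a drawing of $G'$ and then erase the added edges, and since $G'$ has the same $n$ vertices the value $\frac{2n-5}{3}$ is unaffected. Hence I may assume $G$ is triangulated with $n\ge 5$.

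For the main case, carry out steps~1--4. In step~1 invoke Lemma~\ref{lem:independent-filled-hitting}: if $f_G\le\frac{2n-5}{3}$ take an independent-filled-hitting set $E_f$ with $|E_f|\le f_G$, and otherwise one with $|E_f|\le\frac{2n-5}{3}$; either way $|E_f|\le\min\{\frac{2n-5}{3},f_G\}$. Since the outer face is a filled triangle, $E_f$ contains an outer-face edge $e_o$; put $E_s:=E_f\setminus\{e_o\}$, which is independent and avoids the outer face. Subdividing each $e\in E_s$ and re-triangulating from the subdivision vertex produces $G_1$, and, as noted in step~3, the fact that $E_f$ hits every filled triangle of $G$ forces every separating triangle of $G_1$ to contain $e_o$. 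This is exactly the hypothesis of Lemma~\ref{lem:extendBBM}, which therefore yields a non-aligned RI-drawing $\Gamma$ of $G_1-e_o$ with the ends of $e_o$ in the top-left and bottom-right corners.

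Now steps~5--7. Using Observation~\ref{obs:RI_deform}, re-assign coordinates while preserving relative orders so that the original vertices occupy a genuine $n\times n$ integer grid and each subdivision vertex receives non-integer $x$- and $y$-coordinates distinct from everything else. Each subdivision vertex $x_e$ has degree $4$; its neighbours are the four vertices of the two triangles that previously shared $e$, they form a $4$-cycle, and (as $E_s$ is independent) they are all original vertices, hence lie on pairwise distinct grid-lines, none shared with $x_e$. So Lemma~\ref{lem:move_RI} applies and relocates $x_e$ to an integer grid-point near its neighbours; independence of $E_s$ keeps the neighbourhoods of distinct subdivision vertices disjoint, so all relocations can be done at once and the result is still a planar RI-drawing. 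Replacing each $x_e$ by a bend gives a rook-drawing of $G-e_o$ with exactly $|E_s|$ bends (no original vertex moved, so they still occupy distinct rows and columns), and adding $e_o$ with a single bend --- routed vertically from its top-left end and horizontally from its bottom-right end through the otherwise empty leftmost column and bottommost row --- completes a rook-drawing of $G$ with $|E_s|+1=|E_f|\le\min\{\frac{2n-5}{3},f_G\}$ bends.

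The genuine content sits in two places that this argument already factors out: the combinatorial bound of Lemma~\ref{lem:independent-filled-hitting} (via matchings and $4$-colouring) and the case analysis behind Lemma~\ref{lem:move_RI}. Inside the assembly, the step needing the most care is the claim from step~3 that after subdividing $E_s$ every separating triangle of $G_1$ still passes through $e_o$, since this is precisely what makes Lemma~\ref{lem:extendBBM} applicable; checking that each ``deform'' and ``move'' preserves the RI-property, the non-alignment of the original vertices, and the corner placement of $e_o$'s ends is then routine.
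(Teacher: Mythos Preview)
Your proof is correct and follows the paper's approach exactly: the paper's ``proof'' of this theorem is simply the seven-step construction described immediately before the theorem statement together with the observation that each edge of $E_f$ contributes one bend, and you have faithfully reproduced and verified each step and each invocation of Lemmas~\ref{lem:independent-filled-hitting}, \ref{lem:extendBBM}, and~\ref{lem:move_RI}. Your added reductions (to $n\ge 5$ and to triangulated $G$) are reasonable expansions of what the paper leaves implicit.
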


Remark that the algorithm proposed to compute such a drawing 
has polynomial run-time, but is not linear-time
due to the time complexity of finding a small
independent-filled-hitting set. All other steps (including computing an
 RI-drawing) can be done in linear time, so if we are content with a
bound of $b\leq \min\{ f_G, (1-\varepsilon) \frac{2n-5}{3}\}$ for
arbitrarily small $\varepsilon$ then the drawing can be found in linear time.


\section{Conclusion}

In this paper, we continued the work on planar rook-drawings initiated by Auber
et al.~\cite{ABDP15}.  We constructed planar rook-drawings with at most $\frac{2n-5}{3}$
bends; the number of bends can also be bounded by the number of filled triangles.  We also
considered drawings that allow more rows and columns while keeping vertices
on distinct rows and columns; we proved that such non-aligned planar straight-line
drawings always exist and have area $O(n^4)$.
As for open problems, the most interesting question is lower bounds.  No planar
graph is known that needs more than one bend in a planar rook-drawing, and no planar
graph is known that needs more than $2n+1$ grid-lines in a planar non-aligned drawing.
The ``obvious'' approach of taking multiple copies of the octahedron fails 
because the property of having a rook-drawing is not closed under taking subgraphs:
if vertices are added, then they could ``use up'' extraneous grid-lines 
in the drawing of a subgraph.  We conjecture that the $n\times (\frac{4}{3}n-1)$-grid
achieved for nested-triangle graphs is optimal for planar straight-line
non-aligned drawings with width $n$.

\newcommand{\student}[1]{#1}
\newcommand{\postdoc}[1]{#1}

\bibliographystyle{plain}
\bibliography{papers.bib,full.bib,gd.bib}

\iftrue
\newpage
\begin{appendix}
\section{Independent-filled-hitting sets}
\label{sec:independentFilled}

Recall that we want to find a set $E_f$ that {\em hits} all filled
triangles (i.e., contains at least one edge of each filled triangles) 
and is {\em independent} (i.e., no face contains two edges of $E_f$).

Our first result shows how to find a matching of size at most 
$f_G$ in linear time.  The existence of such
a matching could easily be proved using the 4-color theorem (see below
for more details), but with a different approach we can find it 
in linear time.

\begin{lemma}
Any triangulated
planar graph $G$ has an independent-filled-hitting set $E_f$ of size at most 
$f_G$.  It can be found in linear time.
\end{lemma}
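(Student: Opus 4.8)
The plan is to build the independent-filled-hitting set directly from a structural traversal of $G$ rather than through matchings or coloring. First I would root the dual tree: since every filled triangle has interior vertices, I would organize the filled triangles of $G$ by the nesting order of their interiors, so that each filled triangle $T$ (other than the outer face) has a \emph{parent} filled triangle, namely the innermost filled triangle that strictly contains $T$. This gives a forest (in fact a tree rooted at the outer face) on the $f_G$ filled triangles. The goal is to pick one edge per filled triangle to put in $E_f$, then argue that overlaps can be resolved so that each chosen edge lies in a distinct face and that every filled triangle still retains a chosen edge.

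The key combinatorial step is to make the choices consistently so that independence holds essentially for free. For each filled triangle $T$, I would look at the three edges of $T$; I want to assign to $T$ one of its edges, and I want the assignment to have the property that the chosen edge $e_T$ is incident, on the side \emph{interior} to $T$, to a face that is not chosen by any other filled triangle. Here I would use the fact that every edge $e$ of $G$ lies in exactly two faces, and I can orient the choice: charge the edge $e_T$ to the face on its inner side (with respect to $T$). Two different filled triangles $T, T'$ could only collide by choosing the same edge $e$; since $e$ has only two sides, and the ``inner side'' of $e$ with respect to $T$ and with respect to $T'$ must differ if $T \ne T'$ and both use $e$ (one contains the other on that edge), I can break the tie by giving $e$ to the filled triangle for which $e$'s inner face is genuinely inside, and forcing the other filled triangle to pick a different edge. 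Since a triangle has three edges and at any point at most two of its edges can be ``claimed away'' in this manner (a triangle has three neighbouring faces and can conflict on each edge with at most one deeper filled triangle), a simple parsing of the filled triangles from innermost to outermost always leaves at least one free edge.

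Concretely, I would process filled triangles in order of increasing interior size (innermost first). Maintain for each face a Boolean ``used'' flag. When processing $T$, examine its three edges; for each edge $e$, consider the face $f_e$ on the interior side of $e$ relative to $T$. If that face is not yet used, we may assign $e$ to $T$ and mark $f_e$ used. I claim at least one of the three faces $f_{e}$ is available: a face $f_e$ can only have been marked used by a previously processed (hence more deeply nested) filled triangle $T'$ that had $f_e$ on \emph{its} interior side of $e$ — but then $f_e$ lies strictly inside $T'$, which lies inside $T$, so this accounts for at most the edges of $T$ that are ``crossed'' by the boundary between $T$ and deeper structure, and a short case analysis (three edges, and the nested filled triangles below $T$ use interior faces) shows not all three can be blocked. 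This yields $|E_f| \le f_G$, independence by construction (each chosen edge consumes a distinct face, and a face has only three edges, but more to the point two edges of one face chosen would need two filled triangles sharing that face on their interior sides, impossible), and the hitting property is immediate since every filled triangle was assigned an edge.

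The main obstacle I anticipate is making the ``at least one of the three faces is free'' claim fully rigorous: I need to argue carefully that when $T$ is processed, the faces that have been marked used and are incident to $T$'s edges correspond to deeper filled triangles, and that the nesting structure forbids all three edges of $T$ from being simultaneously blocked. This requires a clean statement of what ``interior side of $e$ relative to $T$'' means when $e$ is shared between two filled triangles, and a verification that the order of processing (innermost first) guarantees a deeper triangle $T'$ blocking edge $e$ of $T$ must actually be nested in $T$ through that edge. Handling degenerate configurations — for instance when two filled triangles share two edges, or when the only interior vertex structure is minimal — will need explicit checking. The linear-time bound then follows because the nesting forest, the face incidences, and the whole parsing are all computable by a single traversal of the planar embedding; I would note this at the end without elaborating the data-structure details.
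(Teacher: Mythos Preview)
Your independence argument has a genuine gap. You claim that ``two edges of one face chosen would need two filled triangles sharing that face on their interior sides,'' but this is false: your face-marking scheme only records the \emph{interior}-side face of each chosen edge, so two chosen edges can still share a face that was the \emph{exterior}-side face for one of them.

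Concretely, take the outer triangle $\{a,b,c\}$ with a vertex $d$ inside adjacent to $a,b,c$, and a vertex $e$ inside $\{a,b,d\}$ adjacent to $a,b,d$. The filled triangles are $T_1=\{a,b,c\}$ and $T_2=\{a,b,d\}$. Processing $T_2$ first, your algorithm may pick edge $(a,d)$ and mark its interior-side face $\{a,d,e\}$. Then processing $T_1$, the interior-side faces of $(a,b),(b,c),(a,c)$ are $\{a,b,e\},\{b,c,d\},\{a,c,d\}$, all unmarked, so your algorithm may pick $(a,c)$. Now $E_f=\{(a,d),(a,c)\}$, but the face $\{a,c,d\}$ contains both edges --- independence fails. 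The point is that $(a,d)$'s \emph{exterior}-side face in $T_2$ is $\{a,c,d\}$, and nothing in your scheme prevents a later choice from colliding there. So the obstacle is not the availability claim you flagged, but the independence claim you treated as automatic.

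The paper sidesteps this by using perfect matchings in the dual graph, where independence is built in: a dual matching touches each face at most once by definition. Its induction is over maximal separating triangles: it forms the skeleton $\skel{G}$ by removing the interiors of all maximal separating triangles, takes a perfect matching in the dual of $\skel{G}$ (which exists by Petersen's theorem and can be forced to contain a prescribed outer-face edge), and then recurses into each removed subgraph, prescribing as its outer-face edge the one already selected by the matching. This top-down coupling between levels is exactly what your inside-out greedy pass is missing; without it, the choices made for inner filled triangles are not coordinated with the faces that outer filled triangles will later need.
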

\begin{proof}
We prove a slightly stronger statement, namely, that we can find such a set
$E_f$ and additionally (i) we can prescribe which edge $e_o$ on the outer-face 
is in $E_f$, and (ii) every separating triangle has {\em exactly} one edge 
in $E_f$.

We proceed by induction on the number of filled triangles.  If $G$ has only 
one (namely, the outer-face), then use the prescribed edge $e_o$; this satisfies
all claims.
Now assume $G$ has multiple filled triangles, and
let $T_1,\dots,T_k$ (for $k\geq 1$) be the maximal separating triangles in the
sense that no other separating triangle contains $T_i$ inside.
Define $G_i$ (for $i=1,\dots,k$) to be the graph consisting
of $T_i$ and all vertices inside $T_i$.  Since we chose maximal separating 
triangles, graphs $G_1,\dots,G_k$ are disjoint. 
Let the {\em skeleton} $\skel{G}$ of $G$
be the graph obtained from $G$ by removing the interior of $T_1,\dots,T_k$.

Let $(\skel{G})^*$ be the dual graph of $\skel{G}$, i.e., it has a vertex 
for every face of $\skel{G}$ and a dual edge $e^*$ for any edge $e$
that connects the two faces that $e$ is incident to.  Since $\skel{G}$
is triangulated, its dual graph is 3-regular and 3-connected,
and therefore has a perfect matching $M$ by Petersen's theorem (see
e.g.~\cite{BBD+01}).  
If the dual edge $e_o^*$ of $e_o$ is not in $M$, then find an alternating cycle that contains $e_o^*$ and swap matching/non-matching edges so that $e_o^*$ is in $M$.  

For every maximal separating triangle $T_i$ of $G$, exactly one edge $e_i$ of $T_i$ has its dual edge in matching $M$,
since $T_i$ forms a face in $\skel{G}$.  Find an independent-filled-hitting set $E_f(G_i)$ 
of $G_i$ that contains $e_i$ recursively.  Combine all these independent-filled-hitting sets
into one set and add $e_o$ to it (if not already in it); the result is set $E_f$.  
Every filled triangle of $G$ is either the outer-face or a filled triangle
of one of the subgraphs $G_i$, so this hits all filled triangles.  Also, every 
filled triangle contains exactly one edge of $E_f$, so $|E_f|$ is as desired.
Finally if $f$ is a face of $G$, then it is either an inner face of one of the subgraphs $G_i$ or a face of $\skel{G}$. Either way at most one
edge of $f$ is in $E_M$; therefore $E_f$ is independent.

The time complexity is dominated by splitting the graph into its 4-connected
components at all separating triangles, which can be done in
linear time \cite{Kant-IJCGA97},
and by finding the perfect matching in the dual graph, which
can also be done in linear time \cite{BBD+01}.
\end{proof}

We now give the second result, which gives a different (and sometimes better)
bound at the price of being slower to compute.
Recall that the 4-color-theorem for planar graphs states that we
can assign colors $\{1,2,3,4\}$ to vertices of $G$ such that no edge has the
same color at both endpoints \cite{AH77a}.  Define $M_1$ to be all those edges
where the ends are colored $\{1,2\}$ or $\{3,4\}$, set $M_2$ to be all those
edges where the ends are colored $\{1,3\}$ or $\{2,4\}$ and set $M_3$ to be
all those edges where the ends are colored $\{1,4\}$ or $\{2,3\}$.  Since every
face of $G$ is a triangle and colored with 3 different colors, the following is
easy to verify:

\begin{observation}
For $i=1,2,3$, edge set $M_i$ contains exactly
one edge of each triangle.
\end{observation}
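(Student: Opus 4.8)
The final statement is the \textbf{Observation}: for $i=1,2,3$, edge set $M_i$ contains exactly one edge of each triangle. Here $M_1,M_2,M_3$ partition the edges of a $4$-colored triangulated planar graph $G$ according to which pair of colors the endpoints receive: $M_1$ gets pairs $\{1,2\},\{3,4\}$, $M_2$ gets $\{1,3\},\{2,4\}$, and $M_3$ gets $\{1,4\},\{2,3\}$. The plan is a short case-check: fix an arbitrary triangle $\{a,b,c\}$ of $G$ and observe that, since $G$ is properly $4$-colored and $\{a,b,c\}$ is a clique on three vertices, the three vertices receive three \emph{pairwise distinct} colors, so the color set $\{\chi(a),\chi(b),\chi(c)\}$ is a $3$-element subset of $\{1,2,3,4\}$.

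First I would note that the three edges of the triangle carry the three (unordered) color-pairs $\{\chi(a),\chi(b)\}$, $\{\chi(b),\chi(c)\}$, $\{\chi(a),\chi(c)\}$, and these are pairwise distinct pairs (any two of them share exactly one color, namely the common vertex's color, hence cannot be equal). So the triangle contributes three distinct color-pairs, each a $2$-subset of the $3$-set $S:=\{\chi(a),\chi(b),\chi(c)\}$; since a $3$-set has exactly three $2$-subsets, these three edge-pairs are \emph{exactly} the three $2$-subsets of $S$. Next I would go through the two possibilities for which color of $\{1,2,3,4\}$ is \emph{missing} from $S$. If color $4\notin S$, then $S=\{1,2,3\}$ and the three edge-pairs are $\{1,2\},\{1,3\},\{2,3\}$, which lie in $M_1,M_2,M_3$ respectively — one each. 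If $3\notin S$, the pairs are $\{1,2\},\{1,4\},\{2,4\}$, landing in $M_1,M_3,M_2$. If $2\notin S$, the pairs are $\{1,3\},\{1,4\},\{3,4\}$, in $M_2,M_3,M_1$. If $1\notin S$, the pairs are $\{2,3\},\{2,4\},\{3,4\}$, in $M_3,M_2,M_1$. In every case the triangle has exactly one edge in each of $M_1,M_2,M_3$.

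A slightly cleaner way to phrase the same argument, which I would probably use to avoid the four-case enumeration, is to identify $\{1,2,3,4\}$ with the nonzero elements of the Klein four-group $\mathbb{Z}_2\times\mathbb{Z}_2$ (say $1\mapsto (0,1),\,2\mapsto(1,0),\,3\mapsto(1,1),\,4\mapsto(0,0)$, or any labeling making the three pair-classes correspond to the three nonzero group elements via $\chi(u)+\chi(v)$). Then an edge $(u,v)$ lies in $M_i$ iff $\chi(u)+\chi(v)$ equals the $i$-th nonzero group element, and for a triangle $\{a,b,c\}$ the three edge-sums are $\chi(a)+\chi(b)$, $\chi(b)+\chi(c)$, $\chi(a)+\chi(c)$, whose sum is $0$; three nonzero elements of the Klein four-group summing to $0$ must be the three distinct nonzero elements, so again each $M_i$ receives exactly one triangle edge. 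There is essentially no obstacle here — the only thing to be careful about is the (trivial) fact that the three colors on a triangle are distinct and that the three derived pairs are distinct, which is where properness of the coloring and the triangulation hypothesis (every face a triangle) are used.
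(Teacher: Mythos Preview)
Your proof is correct and is precisely the routine verification the paper leaves to the reader (the paper only says ``Since every face of $G$ is a triangle and colored with 3 different colors, the following is easy to verify'' and gives no further argument). One small quibble: your closing remark that the triangulation hypothesis is used here is not accurate---the observation concerns \emph{any} triangle of $G$, and your argument only needs that the three vertices of a triangle receive distinct colors, which follows from properness of the $4$-coloring alone; the triangulation hypothesis is used elsewhere in the paper (to make each $M_i$ a dual perfect matching) but not for this statement.
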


So each $M_i$ is an independent-filled-hitting set.  Define $E_i$ to be the
set of edges obtained by deleting from $M_i$ all those edges that do not belong
to any filled triangle.  Clearly $E_i$ is still an independent-filled-hitting 
set, and since it contains exactly one edge of each filled triangle, its
size is also at most $f_G$.  The best of these three
disjoint independent-filled-hitting sets
contains at most $\frac{2n-5}{3}$ edges, due to the following:

\begin{lemma}
Any triangulated planar graph $G$ with $n \geq 4$
has at most $2n-5$ edges that belong to a filled triangle.
\end{lemma}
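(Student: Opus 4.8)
Any triangulated planar graph $G$ with $n\ge 4$ has at most $2n-5$ edges that belong to a filled triangle.

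The plan is to set up a charging argument that assigns to every edge belonging to a filled triangle a distinct interior face of $G$, and then to invoke the fact that a triangulated planar graph on $n$ vertices has exactly $2n-4$ faces, hence $2n-5$ interior faces, with the outer-face excluded. First I would recall that a filled triangle $T$ (by the definition given earlier in the paper) has at least one vertex strictly inside it; in particular $T$ is not an interior face, so both of the two faces incident to any edge $e$ of $T$ lie on well-defined "sides," and the side of $e$ facing the interior of $T$ contains all the vertices strictly inside $T$. The key geometric observation is: if $e$ belongs to a filled triangle $T$, then the face $f(e)$ incident to $e$ on the interior side of $T$ is an interior face of $G$ (it is not the outer-face, since it is separated from the outer-face by $T$), and it is ``used'' by $e$ in a way that no other edge can use it.

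The main step is to prove the assignment $e \mapsto f(e)$ is injective. Suppose $e$ and $e'$ are two distinct edges both belonging to filled triangles, with $f(e) = f(e') =: f$. Since $G$ is triangulated, $f$ is a triangle with exactly three edges, and $e,e'$ are two of them; let $e''$ be the third. The edge $e$ has $f$ on its interior side relative to some filled triangle $T \ni e$, so all vertices inside $T$ lie (weakly) on the $f$-side of $e$; since $f$ is a bounded triangular face, the only vertex of $G$ on the $f$-side of $e$ that is ``close'' is the apex of $f$, and one argues that $T$ must in fact be the triangle $f$ itself extended — more carefully, $T$ has an edge $e$ with $f$ immediately on its inner side, which forces the third vertex of $f$ to be inside-or-on $T$; combined with the same statement for $e'$ and its filled triangle $T'$, a short case analysis on the cyclic position of $e, e', e''$ around $f$ shows the interior sides are incompatible, giving a contradiction. (This is the routine-but-fiddly part.) Hence the map is injective.

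The main obstacle I anticipate is exactly this injectivity argument: making precise what ``the face on the interior side'' means and ruling out the degenerate configuration where $G$ itself is a single triangle plus one interior vertex (the case $n=4$), which is why the hypothesis $n\ge 4$ and the bound $2n-5 = 3$ are sharp. I would handle $n = 4$ by inspection (the unique such graph $K_4$ drawn with one vertex inside has exactly $3$ edges in its one filled triangle, the outer-face, matching $2\cdot 4 - 5 = 3$), and then run the charging argument for $n \ge 5$. Once injectivity is established, the count is immediate: the number of edges belonging to a filled triangle is at most the number of interior faces, which is $(2n-4) - 1 = 2n-5$, completing the proof. An alternative, perhaps cleaner, packaging is to fix a filled triangle $T$ and charge each of its three edges to one of the three faces inside $T$ adjacent to it, then observe that as $T$ ranges over filled triangles the interior regions nest, so a global sweep from outermost to innermost filled triangle assigns faces without collision; but the per-edge injective map is the version I would write up.
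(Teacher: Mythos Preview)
Your approach---charge each such edge to the incident face on the interior side of a filled triangle containing it, then bound by the number of interior faces---is exactly the paper's. Where you anticipate a ``routine-but-fiddly'' case analysis for injectivity, the paper's argument is short and clean: if $e_1,e_2$ are both charged to the same face $f$ via filled triangles $T_1,T_2$, then $f$ lies inside both, so by the nesting of triangles in a plane triangulation one of them, say $T_2$, lies inside the other; now $e_1$ is an edge of $f$ and hence on-or-inside $T_2$, but $e_1$ is also an edge of $T_1\supseteq T_2$ and hence on-or-outside $T_2$, so $e_1$ lies on $T_2$, and then $T_2$ contains both $e_1$ and $e_2$ and therefore all three vertices of $f$, forcing $T_2=f$ to be an inner face---contradicting that $T_2$ is filled. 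This is the only place $n\ge 4$ is used, so your separate treatment of $n=4$ is unnecessary. Note also that your phrase ``the interior sides are incompatible'' is not quite the right picture: the sides are perfectly compatible, and the contradiction comes from one of the $T_i$ collapsing to the face $f$ itself.
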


We would like to mention first that Cardinal et al.~\cite{CHK+15} 
gave a very similar result, namely that every triangulated planar graph contains
at most $2n-7$ edges that belong to a separating triangle.  (This immediately
implies that at most $2n-4$ edges belong to a filled triangle.)  Their proof
(not given in \cite{CHK+15}, but kindly shared via private communication) is
quite different from ours below, and does not seem to adapt easily to 
filled triangles.  Since our proof is quite short, we give it below despite
the rather minor improvement.

\begin{proof}
For every edge $e$ that belongs to a filled triangle,  fix an arbitrary
filled triangle $T$ containing $e$ and assign to $e$
the face that is incident to $e$ and inside triangle $T$.  We claim that
no face $f$ can have been assigned to two edges $e_1$ and $e_2$.  
Assume for contradiction that it did, so there are two distinct filled triangles
$T_1$ and $T_2$, both having $f$ inside and with $e_i$ incident to $T_i$
for $i=1,2$.  
Since face $f$ is inside both $T_1$ and $T_2$, one of the two triangles 
(say $T_2$) is inside the other (say $T_1$).  
Since $e_1$ belongs to $f$, it is on or inside $T_2$,
but since it is also on $T_1$ (which contains $T_2$ inside) it therefore
must be on $T_2$.  But then $T_2$ contains $e_1$ and $e_2$, and these two
(distinct) edges hence determine the three vertices of $T_2$.  But these
three vertices also belong to the triangular face $f$, and so $T_2$ is
an inner face and hence not a filled triangle by $n\geq 4$: contradiction.

With that, we can assign a unique inner face to every edge of a filled 
triangle, therefore in total there are at most $2n-5$ of them.
\end{proof}

We could find the smallest of edge sets $E_1,E_2,E_3$
by 4-coloring the graph (which can be done in $O(n^2)$ 
time \cite{RSST97}), but a better approach is the following:  Compute the 
dual graph $G^*$, and assign
weight 1 to an edge $e^*$ if the corresponding edge $e$ in $G$ belongs to a
filled triangle; else assign weight 0 to $e^*$.    Now find a minimum-weight
perfect matching $M$ in $G^*$; this can be done in $O((n\log n)^{1.5}
\sqrt{\alpha(n,n)})$ time \cite{GT91} since we have $m\in O(n)$ and maximum
weight 1.
Deleting from $M$ all edges of weight 0
then gives an independent-filled-hitting set $E_f$, and it has size at most
$\min\{f_G,\frac{2n-5}{3}\}$ since one of the three 
perfect matching of $G^*$ induced by a 4-coloring
would have at most this weight.  If we allow ourselves a slightly worse
matching, then we can find it in linear time:  Baker \cite{Baker94} gave
a linear-time PTAS for finding a maximum matching in a planar graph, and 
it can be easily adapted to a PTAS for minimum-weight perfect matching in a planar
graph.  We can hence conclude:

\begin{corollary}
Every planar graph $G$ has an independent-filled-hitting set of size at most
$\frac{2n-5}{3}$. It can be found in $O((n\log n)^{1.5}\alpha(n,n))$ time
or approximated arbitrarily close in $O(n)$ time.
\end{corollary}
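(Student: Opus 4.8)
The plan is to obtain the size bound from a proper $4$-colouring and then to replace the use of the $4$-colour theorem by a minimum-weight perfect-matching computation in the dual graph. For the existence statement I would start from the three edge sets $M_1,M_2,M_3$ associated above with a $4$-colouring of $G$. Recall that each $M_i$ contains exactly one edge of every triangular face, so deleting from $M_i$ every edge that lies on no filled triangle leaves an independent-filled-hitting set $E_i$ of size at most $f_G$. The sets $M_1,M_2,M_3$ partition $E(G)$, hence $E_1,E_2,E_3$ form a partition of the set of edges lying on some filled triangle; by the lemma above that set has at most $2n-5$ elements, so the smallest of $E_1,E_2,E_3$ has at most $\frac{2n-5}{3}$ edges.

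To make this constructive without paying for a $4$-colouring, I would work in the dual graph $G^*$. Since $G$ is triangulated, $G^*$ is cubic and $3$-connected, hence has a perfect matching (Petersen). Weight a dual edge $e^*$ by $1$ if the primal edge $e$ lies on a filled triangle and by $0$ otherwise. For each $i$ the dual of $M_i$ is a perfect matching of $G^*$ --- it meets every face because every face is a triangle with exactly one $M_i$-edge --- and its weight is $|E_i|$; hence a minimum-weight perfect matching $M$ of $G^*$ has weight at most $\min_i|E_i|\le\frac{2n-5}{3}$. Taking $E_f$ to be the set of primal edges of the weight-$1$ edges of $M$ then works: $|E_f|$ equals the weight of $M$; $E_f$ is independent because it is contained in the primal of a matching of $G^*$; and $E_f$ hits every filled triangle --- the outer face because $M$ matches its dual vertex, and each separating triangle $T$ because $M$ must contain an edge of $T$. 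This last point is the one non-routine step, and I would prove it by a parity count: by Euler's formula the region bounded by $T$ contains an odd number of faces of $G$, while the only edges joining a face inside $T$ to a face outside $T$ are the three edges of $T$, so a matching cannot pair up all of the interior faces among themselves and must use an edge of $T$. Since $G^*$ has $O(n)$ edges and weights in $\{0,1\}$, such an $M$ can be computed in $O((n\log n)^{1.5}\sqrt{\alpha(n,n)})$ time by the algorithm of Gabow and Tarjan, which gives the first running-time claim.

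For the linear-time approximation I would replace the exact matching by an approximate one computed with Baker's layering technique: partition $G^*$ into BFS layers, delete one layer out of every $\Theta(1/\varepsilon)$ so that each surviving component lies in a bounded number of consecutive layers and hence has bounded treewidth, compute a minimum-weight perfect matching of each component by dynamic programming, and reassemble; choosing the deleted layers so as to minimise their weight loses only an $\varepsilon$-fraction of the total, and since the optimum is at most $\frac{2n-5}{3}$ this produces an independent-filled-hitting set of size at most $(1+\varepsilon)\frac{2n-5}{3}$ in $O(n)$ time. Adapting Baker's planar matching PTAS from maximum matching to minimum-weight \emph{perfect} matching needs some care --- one must ensure that the layered subgraphs still admit perfect matchings --- but is otherwise routine; everything else (building $G^*$, marking the edges on filled triangles, computing the base perfect matching) is standard and linear-time.
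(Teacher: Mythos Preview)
Your proposal is correct and follows essentially the same route as the paper: derive the $(2n-5)/3$ bound from a 4-colouring via the three induced dual perfect matchings, then replace the colouring by a minimum-weight perfect matching in $G^*$ (Gabow--Tarjan for the exact bound, Baker's technique for the linear-time approximation). You even supply the parity argument---an odd number of faces lie strictly inside any separating triangle, so a perfect matching of $G^*$ must use one of its three edges---which the paper leaves implicit when asserting that the weight-1 edges of $M$ form an independent-filled-hitting set.
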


It seems quite plausible that such an independent-filled-hitting set could
be computed in linear time, for example by modifying the 
perfect-matching-algorithm for 3-regular biconnected planar graphs \cite{BBD+01}
to take into account 0-1-edge-weights, with edges having weight 1 only if
they occur in a non-trivial 3-edge-cut.  This remains for future work.

\end{appendix}
\fi

\section{Proof of Lemma~\ref{lem:move_RI}}
\label{sec:move_RI}

We aim to prove the following:

\begin{quotation}
	Let $\Gamma$ be a planar RI-drawing. 
	Let $x$ be an interior vertex of degree 4 with neighbours 
	$u_1,u_2,u_3,u_4$ that form a 4-cycle.
	Assume that none of $x,u_1,u_2,u_3,u_4$ share a grid-line.
	Then we can move $x$ to a point on grid-lines of its neighbours
	and obtain a planar RI-drawing.
\end{quotation}

We assume that the naming is such that $u_1,u_2,u_3,u_4$ are the
neighbours of $x$ in counter-clockwise order.  We will also use the
notation $R(u,v)$ for the (open) axis-aligned rectangle 
whose diagonally opposite
corners are $u$ and $v$.

Consider the four quadrants relative to $x$, using the
open sets.  Each neighbour of $x$ shares no grid-line with $x$ and hence
belongs to some quadrant.
We claim that (after suitable
renaming) $u_i$ is in quadrant $i$ for $i=1,2,3,4$.  For if any quadrant
is empty, then either all four of $u_1,u_2,u_3,u_4$ are within two consecutive
quadrants (in case of which $x$ is outside cycle $u_1,u_2,u_3,u_4$, violating
planarity), or two consecutive vertices of $u_1,u_2,u_3,u_4$ are in diagonally
opposite quadrants (in case of which $x$ is inside their rectangle-of-influence,
violating the condition of an RI-drawing).  So each quadrant contains at
least one of the four vertices, implying that each contains exactly one of them.

Consider the five columns of $x,u_1,u_2,u_3,u_4$; for ease of description
we will denote these columns by $1,\dots,5$ (in order from left to right),
even though their actual $x$-coordinates may be different.
Likewise let $1,\dots,5$ be the five rows of $x,u_1,u_2,u_3,u_4$.
Since $x$ has a neighbour in each quadrant, it must be at $(3,3)$.
The open set $R\big((2,2),(4,4)\big)$ contains none of $u_1,u_2,u_3,u_4$,
so the cycle $C_u$ formed by $u_1,u_2,u_3,u_4$ goes around
$R\big((2,2),(4,4)\big)$.   The only vertex inside $C_u$ is $x$,
which implies that no vertex other than $u_3$ or $u_4$ can be at $(2,2)$ or $(4,2)$.
But not both $u_3$ and $u_4$ can be in row $2$, so we may
assume that point $x':=(2,2)$ contains no vertex (the other case
is similar).
Move $x$ to $x'$, which puts it
on grid-lines of two of its neighbours. 

\begin{figure}[ht]
\hspace*{\fill}
\includegraphics[width=80mm]{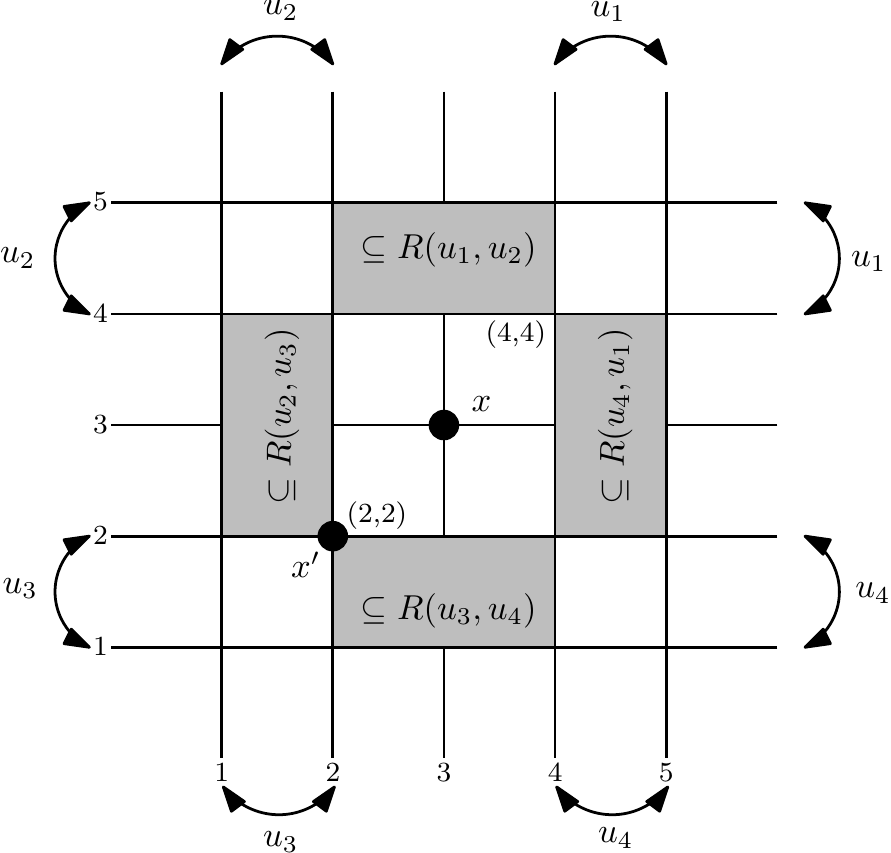}
\hspace*{\fill}
\caption{Moving $x$ to two grid-lines among its neighbours.}
\end{figure}

We claim that we obtain an RI-drawing,
and verify the conditions for the four edges $(x,u_i)$ separately:
\begin{itemize}
\item Vertex $u_3$ is on row 1 or 2 and
	column 1 or 2.  But by choice of $x'$ it is not at $(2,2)$.
	No matter where it is, rectangle $R(u_3,x)$ has $x'=(2,2)$
	inside or on the boundary,
	and therefore $R(u_3,x')\subset R(u_3,x)$ is empty since $(u_3,x)$
	is an edge in an RI-drawing.
\item Vertices $u_3$ and $u_4$ are on rows 1 and 2, but not on the
	same row, so $R(u_3,u_4)$ contains points that are between rows
	1 and 2.  Further, $u_3$ is on column 2 or left while $u_4$ is on
	column 4 or right, so $R(u_3,u_4)$ includes $R\big((2,1),(4,2)\big)$.
	So the empty rectangle $R(u_3,u_4)$ contains point $(2,2)=x'$ and
	therefore includes rectangle $R(x',u_4)$.  
\item Similarly one shows that $R(x',u_2)$ is
	empty.
\item It remains to show that $R(x',u_1)$ is empty, regardless of
	the position of $u_1$ within quadrant 1.  To do so, split
	$R(x',u_1)$ into parts and observe that all of them are empty.
	We already saw that $R_1:=R\big((2,2),(4,4)\big)$ is empty.
	We also claim that $R_2:=R\big((2,4),(4,5)\big)$ is empty.  This
	holds because $u_1$ and $u_2$ are on rows 4 and 5 (but not
	on the same row) and this rectangle hence is within $R(u_1,u_2)$.
	Similarly one shows that $R_3:=R\big((4,2),(5,4)\big)$ is empty.

	Notice that $R_1\cup R_2\cup R_3$ contains $R(x',u_1)$, unless
	$u_1$ is at $(5,5)$.  But in the latter case $R_4:=R\big((4,4),(5,5)\big)
	\subset R(x,u_1)$ is empty.  So either way $R(x',u_1)$ is contained
	within the union of empty rectangles and therefore is empty.
\end{itemize}

\end{document}